\newtheorem{theorem}{Theorem}[section]
\newtheorem{lemma}[theorem]{Lemma}
\newtheorem{proposition}[theorem]{Proposition}
\newcommand{\ie}{{\it i.e.\ }}
\newcommand{\eg}{{\it e.g.\ }}
\newcommand{\viz}{{\it viz.\ }}
\def\RR{{\mathbb R}}
\def\CC{{\mathbb C}}
\def\NN{{\mathbb N}}
\def\TT{{\mathbb T}}
\def\cA{{\cal A}}
\def\cB{{\cal B}}
\def\cC{{\cal C}}
\def\cH{{\cal H}}
\def\cM{{\cal M}}
\def\cN{{\cal N}}
\def\cO{{\cal O}}
\def\bcA{\mbox{\boldmath $\cA$}}
\def\bcC{\mbox{\boldmath $\cC$}}
\def\bcN{\mbox{\boldmath $\cN$}}
\def\sbcN{\mbox{\boldmath $\scriptstyle \cN$}}
\def\bomegaN{\mbox{\boldmath $\omega_{\cN}$}}
\newcommand{\rest}{\upharpoonright}
\newcommand{\Ad}[1]{\mbox{Ad} \hspace{1pt} #1}
\title{Superposition, transition probabilities 
and \\ primitive observables in infinite quantum systems}
\author{{\Large Detlev Buchholz\,${}^a$
\ and \  Erling St{\o}rmer\,$^b$} \\[2mm]
${}^a$ Institut f\"ur Theoretische Physik,
Universit\"at G\"ottingen, \\ 
37077 G\"ottingen, Germany  \\[1mm]
${}^b$   Department of Mathematics, University of Oslo, \\
0316 Oslo, Norway }
\date{}
\begin{document}

\maketitle

\begin{abstract}
\noindent The concepts of 
superposition and of transition probability, 
familiar from pure states in quantum physics,
are extended to locally normal states on funnels of type~I$_\infty$ 
factors. Such funnels
are used in the description of infinite systems,  
appearing for example in quantum field theory or in quantum statistical 
mechanics; their respective constituents  
are interpreted as algebras of observables localized 
in an increasing family of nested spacetime regions.    
Given a generic reference state (expectation functional) on a funnel, \eg  
a ground state or a thermal equilibrium state, it is shown that 
irrespective of the global type of this state all of its 
excitations, generated by the adjoint action of elements of the funnel, 
can coherently be superimposed in a 
meaningful manner. Moreover, these states are 
the extreme points of their convex hull and as such are analogues 
of pure states. As further support of 
this analogy, transition probabilities are defined,
complete families of orthogonal states are exhibited 
and a one--to--one correspondence between the states and 
families of minimal projections on a Hilbert space is established. 
The physical interpretation of these quantities relies 
on a concept of primitive observables. It 
extends the familiar framework  of 
observable  algebras and avoids some counter intuitive features of that 
setting. Primitive observables admit a consistent 
statistical interpretation of corresponding measurements
and their impact on states is described by a variant of 
the von Neumann--L\"uders projection postulate.
\end{abstract}

\newpage

\section{Introduction} 
\setcounter{equation}{0}
Local quantum physics \cite{Ha}, as opposed to quantum mechanics, 
incorporates the idea that one can assign observables to  
bounded spacetime regions where corresponding measurements 
can be carried out. This point of view, relying on the Heisenberg
picture, has proved to be fruitful in the analysis of states 
in systems with an infinite number of degrees of freedom, appearing for 
example in quantum field theory or in quantum statistical mechanics. 
The global properties of these states can be quite different from 
those in standard quantum mechanics. Their description 
often requires algebras which are distinct from the familiar algebra 
$\cB(\cH)$ of all bounded operators on some separable Hilbert space $\cH$, 
being the prototype of a factor of type~I$_\infty$ according 
to the classification 
of von Neumann algebras. In fact, one also encounters algebras of 
type~II$_\infty$ and III. For states on the latter algebras the
celebrated superposition principle fails and there exists up to now no 
operationally meaningful definition of transition probabilities, 
such as for pure states. 

The restrictions of these global 
states (expectation functionals) to observables in bounded 
spacetime regions, however, behave generically like states with a 
limited number of degrees of freedom \cite{BuWi}. So a description of these 
partial states in terms of type~I$_\infty$ factors is meaningful and 
also instructive \cite{DoLo}. This insight triggered studies of funnels of 
type~I$_\infty$ factors $\cN_n \subset \cN_{n + 1}$ 
with common identity, $n \in \NN$,    
which are interpreted as algebras of observables associated with 
an increasing family (net) of nested spacetime 
regions \cite{HaKaKa,Ta}.   
Thinking of a net of strictly increasing regions it
is natural to assume that the algebras $\cN_n$ are also strictly 
increasing with increasing $n$. This feature can be expressed by 
the condition that the subalgebra of operators in 
$\cN_{n+1}$ which commute with all operators 
in $\cN_n$, in notation $\cN_n^{\, \prime} \bigcap \cN_{n+1}$, 
has infinite dimension and hence is also a type~I$_\infty$ factor,
$n \in \NN$.  Such funnels were named 
``proper sequential type~I$_\infty$ funnels'' by Takesaki \cite{Ta}; 
since we restrict attention here to this case 
we will use the shorter term funnel for them. 
We denote the algebra generated by a given funnel by
$\bcN \doteq \bigcup_n \cN_n$; it can be interpreted
as the algebra generated by all observables which are localized 
in bounded spacetime regions. 

The physical states on a funnel are described by positive, linear 
and normalized expectation functionals $\, \omega : \bcN \rightarrow \CC$
which are locally normal, \viz weak operator continuous
on the unit ball of each subalgebra $\cN_n$, $n \in \NN$. 
Given such a state $\omega$ one obtains 
by the Gelfand--Naimark--Segal (GNS) construction
a faithful representation of  
the funnel on some separable Hilbert 
space $\cH$; hence one can identify $\bcN$ with a subalgebra
of $\cB(\cH)$. Moreover, there is a  
unit vector $\Omega \in \cH$ such that 
$\omega(A) = \langle \Omega, A \Omega \rangle$, $A \in \bcN$, 
and $\Omega$ is cyclic for $\bcN$, \viz  
the subspace~$\bcN \, \Omega$ is dense in~$\cH$. 
Depending on the choice of state, the closure of  
$\bcN \subset \cB(\cH)$ in the weak operator
topology, denoted by $\cM$, can be of any infinite type. Yet 
this fact is of no relevance here. 

We will restrict attention to states $\omega$
whose representing vector $\Omega \in \cH$ is separating for~$\bcN$, 
\viz the equality $A \Omega = 0$ for $A \in \bcN$
implies $A = 0$. Hence no observable in 
$\bcN$ has a sharp (non--fluctuating) value in the state, so one does not
have any \textit{a priori} information about 
local properties of the corresponding ensemble.
This feature is implied by the more stringent 
condition that~$\Omega$ is cyclic for each  
algebra $\cN_n^{\, \prime} \bigcap \cN_{n+1}$, $n \in \NN$.
We shall say that $\omega$ is a \textit{generic state} on $\bcN$ 
if its GNS--vector $\Omega$ has the latter property. Indeed,
it follows from a result of Dixmier and Marechall \cite{DiMa}
that almost all rays $\CC \, \Omega \subset \cH$ arise from  
generic  states
$\omega$ (they form a $G_\delta$ set which is dense). 

Generic states appear also frequently in physics,  
prominent examples being states of finite energy \cite[Lem.\ 5]{Bo}, 
thermal equilibrium states in quantum field theory \cite{Ja}
and states on curved spacetimes satisfying a
microlocal spectrum condition  \cite{StVeWo}. They 
may be regarded as some \textit{ad hoc} description of 
a global background (``state of the world'') in which 
local operations and measurements are performed.
Given a generic state $\omega$ on $\bcN$ we consider 
its local excitations \  
$\bomegaN \doteq 
\{ \omega_A = \omega \circ \Ad{A} : A \in \bcN, \, \omega_A(1) = 1 \}$, 
where $\Ad{A}$ denotes the adjoint action of $A$ on~$\bcN$ given
by $\Ad{A} \, (B) \doteq A^* B A$, $B \in \bcN$. 
The completion of the convex hull of 
$\bomegaN$ in the norm topology induced by $\bcN$,
called the folium of~$\omega$, coincides with the set
of normal states on~$\cM$ \cite{HaKaKa}. But we will not
deal with this completion here. 

In the present investigation we analyze for given generic state 
$\omega$ on $\bcN$ the structure of its local excitations $\bomegaN$
and show that they have many properties in common with the set of 
pure states in finite quantum systems. 
In the subsequent section we prove that there exists a 
canonical (bijective) lift from $\bomegaN$ to 
``normalized'' rays in $\bcN$
given by $\omega_A \mapsto \TT A$,   
$\TT$ being the group of phase factors. (We will use the term
``ray'' also for these normalized sections.)
Since $\bcN$ is in a natural way a 
vector space equipped with a scalar product induced by $\omega$,
this operation is analogous to lifting pure states 
to rays of Hilbert space vectors. Thus the 
states in~$\bomegaN$ can coherently be superimposed
also in those cases where the underlying state~$\omega$ 
on~$\bcN$ is not pure and the weak closure $\cM$ of $\bcN$ is of any type. 
Moreover, the states $\bomegaN$ are the extreme points of
their (algebraic) convex hull, similarly to the case of finite
quantum systems, where pure states are by definition the extreme  
points of their convex hull. 

Making use of these facts we introduce in Sect.~3 an intrinsic 
concept of transition probability between pairs of states 
$\omega_A, \omega_B \in \bomegaN$,
putting $\omega_A \cdot \omega_B \doteq |\omega(A^* B)|^2$. 
This product is locally continuous in both entries 
and can be extended to the convex hull of~$\bomegaN$. 
The states are said to be orthogonal if $\omega_A \cdot \omega_B = 0$
and there exist families of mutually orthogonal states
$\omega_{A_n} \in \bomegaN$, $n \in \NN$, satisfying the 
completeness relation  
$\sum_n \omega_{A_n} \cdot \omega_B = 1$ for all~$\omega_B \in \bomegaN$.

In analogy to the relation between pure states and one--dimensional 
projections whose linear span forms a *--algebra
of compact operators, 
we show in Sect.\ 4 that the linear span of~$\bomegaN$, denoted by
$\mbox{Span} \, \bomegaN$, carries a product \ 
$\mbox{Span} \, \bomegaN \times \mbox{Span} \, \bomegaN 
\rightarrow \mbox{Span} \,\bomegaN$
and a \mbox{*--operation}. They turn $\mbox{Span} \, \bomegaN$
into a *--algebra which
is isomorphic to a subalgebra of the compact operators.
There is also a spectral theorem for $\mbox{Span} \, \bomegaN$
which implies that all states in the convex hull of
$\bomegaN$ can be decomposed into convex 
combinations of orthogonal states. 

In Sect.\ 5
we discuss the effect of inner operations on the states 
$\omega_A \in \bomegaN$ and propose a concept of ``primitive observables''. 
A primitive observable is fixed by specifying 
the adjoint action $\Ad{U}$ of some unitary $U \in \bcN$ which induces the map  
$\omega_A \mapsto \omega_A \circ \Ad{U} = \omega_{\,U \! A}$. 
It is conceptually important 
that the resulting transition probabilities 
$\omega_A \cdot \omega_{\, U \! A} = |\omega_A(U)|^2$
between the initial and final states
may be regarded as observable since the unitaries $U$ are normal 
operators and can therefore be decomposed into two commuting observables
whose mean values can in principle be determined experimentally. 
Hence the above formula for transition probabilities
is a physically meaningful extension of the corresponding one 
for pure states, involving one--dimensional projections.
We therefore assert that the result of a measurement of a primitive 
observable in a given ensemble corresponds to the transition probability 
fixed by the corresponding operation in the given state.

We will show that one can recover from primitive observables 
the familiar observables and their expectation values  
by tuning the underlying unitaries and proceeding to a suitable 
limit. But the primitive observables also provide tools
for the analysis of states which are not available in the 
conventional framework of observable algebras. For example, 
they allow to evade certain counter intuitive 
features and apparent paradoxa in the interpretation 
of relativistic quantum field theories originating from 
the Reeh--Schlieder property of the vacuum   \cite{Hal,BuYn}. Moreover,
the primitive observables lead naturally to   
a concept of commensurability which 
generalizes the condition of commutativity for 
commensurable  observables. This generalization could be 
useful in the discussion of causality properties 
of theories which do not comply with the standard 
postulate of locality of observables \cite{Ha}. 
The article concludes with a brief summary and outlook. 

\section{Superpositions and mixtures}
\setcounter{equation}{0}

We establish in this section the asserted 
extension of the superposition principle to the local excitations
$\bomegaN$ of a given generic state $\omega$ on the algebra 
$\bcN$ generated by a funnel $\cN_n \subset \cN_{n+1}$, $n \in \NN$.
As explained in the introduction, we may assume that 
$\bcN$ is concretely given on some Hilbert space $\cH$,
$\bcN \subset \cB(\cH)$, and that $\omega$
is a vector state induced by some 
unit vector $\Omega \in \cH$ which is cyclic for 
$\cN_n^{\, \prime} \bigcap \cN_{n+1}$, $n \in \NN$.
The following basic lemma is an easy consequence of this 
cyclicity. It is used at various points in the subsequent analysis.

\begin{lemma} \label{basic} 
Let $\omega_{A_m} \in \bomegaN$ and $c_m \in \CC$, $m = 1, \dots , M$, 
such that $\sum_{m = 1}^M c_m \, \omega_{A_m} = 0$. Then 
$\sum_{m = 1}^M c_m \, A_m^* C A_m = 0$ for every $C \in \bcN$. 
\end{lemma}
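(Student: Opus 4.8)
The plan is to reduce the asserted operator identity to the scalar hypothesis by testing against a dense family of vectors supplied by the cyclicity assumption. Put $X \doteq \sum_{m=1}^M c_m\, A_m^* C A_m$. Since each $A_m$ and $C$ lie in $\bcN = \bigcup_n \cN_n$ and there are only finitely many of them, I can first fix an index $n$ with $A_1, \dots, A_M, C \in \cN_n$; then $X \in \cN_n \subset \bcN$. Because $\Omega$ is cyclic for the relative commutant $\cN_n' \cap \cN_{n+1}$ it is in particular separating for $\bcN$, as noted in the text, so it suffices to prove $X \Omega = 0$.

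Next I would exploit the commutation between $\cN_n$ and its relative commutant. Fix any $D \in \cN_n' \cap \cN_{n+1}$. Since $D^*$ commutes with $A_m$, $A_m^*$ and $C$ (all lying in $\cN_n$), a short manipulation gives
$$
\langle D\Omega, X\Omega\rangle
= \sum_{m=1}^M c_m \langle \Omega, A_m^* C A_m D^* \Omega\rangle
= \sum_{m=1}^M c_m \langle \Omega, A_m^* (C D^*) A_m \Omega\rangle
= \sum_{m=1}^M c_m\, \omega_{A_m}(C D^*).
$$
The operator $C D^*$ lies in $\cN_{n+1} \subset \bcN$, so the right-hand side is the value of the functional $\sum_m c_m\, \omega_{A_m}$ on an element of $\bcN$, hence vanishes by the hypothesis $\sum_m c_m\, \omega_{A_m} = 0$.

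Finally I would invoke cyclicity once more: as $D$ ranges over $\cN_n' \cap \cN_{n+1}$ the vectors $D\Omega$ are dense in $\cH$, so $X\Omega$ is orthogonal to a dense set and therefore $X\Omega = 0$; the separating property of $\Omega$ for $\bcN$ then yields $X = 0$, which is the claim.

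I do not expect a genuine obstacle, consistent with the paper calling this an easy consequence of cyclicity. The one point that requires care is the bookkeeping of which factor each operator belongs to, so that $D^*$ may legitimately be commuted past the sandwich $A_m^* C A_m$ and absorbed into the single element $C D^* \in \bcN$. This is precisely where cyclicity of $\Omega$ for the relative commutant $\cN_n' \cap \cN_{n+1}$ is essential, rather than mere cyclicity or separation of $\Omega$ for $\bcN$: it is what makes the test vectors $D\Omega$ simultaneously dense in $\cH$ and commuting with all the given operators.
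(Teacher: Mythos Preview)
Your proof is correct and follows essentially the same route as the paper's. The only cosmetic difference is that the paper sandwiches with two elements $X,Y \in \cN_n' \cap \cN_{n+1}$, obtaining $\sum_m c_m\,\omega_{A_m}(X^*CY)=0$ and concluding directly that the operator vanishes, whereas you test on one side with $D$, deduce $X\Omega=0$, and then invoke the separating property; these are equivalent packagings of the same cyclicity argument.
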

\begin{proof}
There exists some $n \in \NN$ such that  
$A_m \in \cN_n$, $m = 1, \dots , M$. Given 
any $C \in \bcN$ one may assume, choosing $n$ sufficiently 
large, that also $C \in \cN_n$. Now let 
$X,Y \in \cN_n^{\, \prime} \bigcap \cN_{n+1}$, then 
\begin{equation*}
\begin{split}
\sum_{m = 1}^M c_m \, \langle X \Omega, A_m^* C A_m \, Y \Omega \rangle 
= \sum_{m = 1}^M c_m  \,
\langle \Omega, A_m^* X^* C Y A_m \Omega \rangle
= \sum_{m = 1}^M c_m \,  \omega_{A_m}(X^*CY) = 0 \, .
\end{split}
\end{equation*}
Since this relation holds for arbitrary 
$X,Y \in \cN_{n}^{\, \prime} \bigcap \cN_{n+1}$ and 
$\Omega$ is cyclic for this algebra it follows that
$\sum_{m=1}^M c_m A_m^* C A_m = 0$, as claimed. 
\end{proof}

The second technical ingredient in our analysis is the basic fact 
that for any given $n \in \NN$ there exists a pure state 
$\omega_n$ on the type I$_\infty$ factor $\cN_{n+1}$
which is an extension of $\omega \rest \cN_n$, \viz
$\omega_n(C) = \omega(C)$, $C \in \cN_n$.
This is an immediate consequence of the 
assumption that all algebras $\cN_n$ are of 
type~I$_\infty$, implying   
$\cN_{n+1} = \cN_n \otimes (\cN_n^{\, \prime} \bigcap \cN_{n+1})$. 
Since the second tensor factor is infinite dimensional,  
any normal state (density matrix) on $\cN_n$ can be extended 
to a pure state on $\cN_{n+1}$. Having chosen an  extension $\omega_n$
of $\omega \rest \cN_n$, there is some
non--trivial minimal (``one--dimensional'') projection
$E_n \in \cN_{n+1}$ satisfying $E_{n} C E_{n} = \omega_n(C) \, E_n$,
$C \in \cN_{n+1}$. We will make repeatedly use of this result. The 
following fundamental  
proposition is based on these technical ingredients. 

\begin{proposition} \label{lift}
Let $\omega_A , \omega_B \in \bomegaN$  satisfy  
$\omega_A = \omega_B$. There is a phase factor 
$t \in \TT$ such that $B = t  A$. Conversely, if 
$B = t  A$, $t \in \TT$, then $\omega_A = \omega_B$.
Consequently, there exists a bijective lift from the states
in $\bomegaN$ to rays in $\bcN$, given by 
$\omega_A \mapsto \TT A$.
\end{proposition}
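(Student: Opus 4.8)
The plan is to treat the two implications separately and then assemble the lift. The converse is immediate: if $B = t A$ with $t \in \TT$, then for every $C \in \bcN$ one has $\omega_B(C) = \omega(B^* C B) = \bar t \, t \, \omega(A^* C A) = \omega_A(C)$, since $|t| = 1$, so $\omega_A = \omega_B$. For the forward direction I would start from $\omega_A = \omega_B$, rewrite it as $1 \cdot \omega_A + (-1) \cdot \omega_B = 0$, and feed this into Lemma \ref{basic} with $M = 2$, $c_1 = 1$, $c_2 = -1$. This yields the operator identity $A^* C A = B^* C B$ for every $C \in \bcN$, which is the genuine input: the mere equality of the vector functionals $\langle A \Omega, C A \Omega \rangle = \langle B \Omega, C B \Omega \rangle$ would not suffice, as it is insensitive to unitaries in the commutant.

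Next I would choose $n$ so large that $A, B \in \cN_n$ and invoke the second technical ingredient, namely a pure extension $\omega_n$ of $\omega \rest \cN_n$ to $\cN_{n+1}$ together with a projection $E_n \in \cN_{n+1}$ satisfying $E_n C E_n = \omega_n(C) \, E_n$ for $C \in \cN_{n+1}$. Setting $a \doteq A E_n$ and $b \doteq B E_n$, both lying in $\cN_{n+1} \subset \bcN$, I would compute the compressions $a^* a = \omega_n(A^* A) \, E_n = E_n$ and $b^* b = E_n$, using $\omega_n(A^* A) = \omega(A^* A) = \omega_A(1) = 1$ and the analogue for $B$, together with the overlap $a^* b = \omega_n(A^* B) \, E_n$. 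Writing $s \doteq \omega_n(A^* B)$, I then have $a^* b = s \, E_n$ and $b^* a = \bar s \, E_n$, so that $a$ and $b$ behave like normalized ``vectors'' attached to the minimal projection, with $s$ their overlap.

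The decisive step is to show $|s| = 1$, and here the operator identity from Lemma \ref{basic} is used once more: taking $C = a a^* \in \cN_{n+1} \subset \bcN$ gives $a^* (a a^*) a = b^* (a a^*) b$, whose left side is $(a^* a)(a^* a) = E_n$ and whose right side is $(b^* a)(a^* b) = |s|^2 E_n$, forcing $|s| = 1$, hence $s \in \TT$. A short computation then yields $(b - s a)^*(b - s a) = (1 - |s|^2) \, E_n = 0$, so $b = s a$, that is $(B - s A) E_n = 0$. To promote this to $B = s A$ on all of $\cH$, I would set $D \doteq B - s A \in \cN_n$ and observe $E_n D^* D E_n = 0$, whence $\omega_n(D^* D) = 0$; since $D^* D \in \cN_n$ this equals $\omega(D^* D) = \|D \Omega\|^2$, so $D \Omega = 0$, and the separating property of $\Omega$, guaranteed by cyclicity for $\cN_n^{\, \prime} \bigcap \cN_{n+1}$, gives $D = 0$. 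Putting $t \doteq s$ finishes the forward direction, and the two implications together show that $\omega_A \mapsto \TT A$ is well defined and injective on $\bomegaN$ and visibly surjective onto the normalized rays, hence bijective. I expect the main obstacle to be precisely this calibration of the phase, that is, the passage from equality of functionals to proportionality of the operators: the minimal projection $E_n$ serves as a surrogate for the rank--one test operators that $\bcN$ itself need not contain, and the identity $|s| = 1$ is the crux.
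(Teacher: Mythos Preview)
Your argument follows the paper's strategy: both invoke Lemma~\ref{basic} to upgrade $\omega_A = \omega_B$ to the operator identity $A^* C A = B^* C B$ for all $C \in \bcN$, then probe it with a minimal projection $E_n \in \cN_{n+1}$ satisfying $E_n C E_n = \omega(C) E_n$ for $C \in \cN_n$, and conclude via the separating property of~$\Omega$. The paper's bookkeeping is a touch shorter---it inserts $C = B E_n$, multiplies from the left by $E_n$, and reads off $\omega(A^*B)\, E_n A = E_n B$; the norm identity $\|E_n Z\|^2 = \|Z^* \Omega\|^2$ for $Z \in \cN_n$ then gives $\omega(A^*B)\, A = B$ directly, and $|\omega(A^*B)| = 1$ follows from normalization---whereas you first prove $|s| = 1$ with a different test operator and then show $(B - sA)E_n = 0$.

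One step needs tightening: the claim ``taking $C = a a^*$ gives $a^* (a a^*) a = b^* (a a^*) b$'' is not what Lemma~\ref{basic} literally yields. With $C = a a^*$ the identity reads $A^* (a a^*) A = B^* (a a^*) B$; to obtain your version you must sandwich this equality by $E_n$ on both sides, using $a^* = E_n A^*$ and $b^* = E_n B^*$. Once that compression is made explicit, your computation $E_n = (a^*a)^2 = (b^*a)(a^*b) = |s|^2 E_n$ and the remainder of the argument are correct.
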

\begin{proof}
Given $\omega_A , \omega_B \in \bomegaN$ there is 
some $n \in \NN$ such that $A,B \in \cN_n$ and, 
as explained above, there 
is a minimal projection $E_n \in \cN_{n+1}$ such that 
$E_n C E_n = \omega(C) E_n$, $C \in \cN_n$. 
Moreover, since $\omega_A = \omega_B$ one infers from the basic 
lemma that  
$A^* C A = B^* C B$, $C \in \bcN$. 
Inserting here $C = B E_n$ and multiplying the resulting equality
from the left by $E_n$ yields
\mbox{$E_n \, \omega(A^*B) A = E_n \, \omega(B^* B) B$}. 
This implies  $\omega(A^* B) \, A = \omega(B^*B) \, B$
since $\Omega$ is separating for~$\cN_n$ and 
$\| E_n Z \|^2 = \| E_n Z Z^* E_n \| = \| Z^* \Omega \|^2$ for
$Z \in \cN_n$.
It then follows from the normalization condition 
$\omega(A^*A) = \omega(B^* B) = 1$ that 
$B = \omega(A^* B) A$ and $\omega(A^* B) \in \TT$.
The second statement is obvious, completing the proof. 
\end{proof}

It is an immediate  
consequence of this result that 
there holds the following version of the superposition
principle in the present framework. \\[2mm]
\textbf{Definition:} 
Let $A,B \in \bcN$ and 
$c_A, c_B \in \CC$ such that $\omega((c_A A + c_B B)^* (c_A A + c_B B)) = 1$. 
There is a unique state $\omega_{c_A A + c_B B} \in \bomegaN$
corresponding to the ray $\TT (c_A A + c_B B)$.
It defines a coherent superposition of the states  
$\omega_A, \omega_B \in \bomegaN$.  \\[2mm]
\indent 
In the next step we establish continuity 
properties of the lifts from $\bomegaN$ to rays in $\bcN$.
To this end we equip $\bomegaN$ with the norm topology 
induced by $\bcN$, \ie the norm distance between states
$\omega_A, \omega_B \in \bomegaN$ is defined by \ 
$\| \omega_A - \omega_B \| \doteq 
\sup_{C \in \sbcN} |\omega_A(C) - \omega_B(C)|/\|C\|$. 

\begin{proposition} \label{continuity}
Let $n \in \NN$ be fixed and let $A_m \in \cN_n$, $m \in \NN$, be 
a uniformly bounded sequence of operators such that 
$\omega_{A_m}$, $m \in \NN$, form a Cauchy sequence 
in $\bomegaN$. There exists a sequence $t_m \in \TT$,  $m \in \NN$,
such that $\,$ $t_m A_m$,  $m \in \NN$, converges to some operator 
$A \in \cN_n$ in the strong operator topology, and
the sequence $\omega_{A_m}$, $m \in \NN$, 
converges to $\omega_A \in \bomegaN$ in norm.
\end{proposition}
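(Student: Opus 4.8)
The plan is to reduce the statement to a convergence question for the GNS vectors $\xi_m := A_m\Omega$ and then to isolate the one genuinely hard point, namely that the norm topology on $\bomegaN$ controls the Hilbert--space rays $\TT\,\xi_m$. First I would record the easy half together with the reduction. Since $\omega_{A_m}(C)=\langle\xi_m,C\xi_m\rangle$ with $\|\xi_m\|^2=\omega_{A_m}(1)=1$, the two--term splitting $\langle\xi,C\xi\rangle-\langle\eta,C\eta\rangle=\langle\xi,C(\xi-\eta)\rangle+\langle\xi-\eta,C\eta\rangle$ gives $\|\omega_\xi-\omega_\eta\|\le 2\|\xi-\eta\|$ for unit vectors. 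Hence if I can produce phases $t_m\in\TT$ with $t_m\xi_m\to\xi$ in $\cH$, norm convergence of the states follows for free. To turn such a limit vector back into an operator I would invoke genericity: $\Omega$ is cyclic for $\cR:=\cN_n^{\,\prime}\bigcap\cN_{n+1}\subset\cN_n^{\,\prime}$, so $\{X\Omega:X\in\cR\}$ is dense, and for such $X$ one has $t_m A_m X\Omega=X\,t_mA_m\Omega\to X\xi$. Together with the uniform bound on $\|A_m\|$ this forces strong convergence of $t_mA_m$ on a dense set, hence strongly to a bounded operator $A$; as $\cN_n$ is strongly closed, $A\in\cN_n$, and $A\Omega=\xi$ gives $\omega(A^*A)=1$, so the limit state is $\omega_A\in\bomegaN$ and $\|\omega_{A_m}-\omega_A\|=\|\omega_{t_mA_m}-\omega_A\|\le 2\|t_m\xi_m-\xi\|\to0$.

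The core therefore is to choose the phases and prove convergence of $t_m\xi_m$. Writing $z_{mk}:=\omega(A_m^*A_k)=\langle\xi_m,\xi_k\rangle$, one has $\inf_{t\in\TT}\|\xi_m-t\xi_k\|^2=2-2|z_{mk}|$, so everything comes down to showing that the Cauchy hypothesis on $\omega_{A_m}$ forces the transition probabilities to satisfy $|z_{mk}|\to1$ as $m,k\to\infty$. Granting this, the rays $\TT\,\xi_m$ are Cauchy in the complete projective space $P(\cH)$; picking a unit representative $\xi$ of their limit ray and phases $t_m$ with $\langle\xi,t_m\xi_m\rangle=|\langle\xi,\xi_m\rangle|\ge0$ (legitimate once $|\langle\xi,\xi_m\rangle|>0$, which holds for large $m$) yields $\|t_m\xi_m-\xi\|^2=2-2|\langle\xi,\xi_m\rangle|\to0$, exactly the genuine convergence required above.

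The hard part is precisely the lower bound: that the supremum defining $\|\omega_{A_m}-\omega_{A_k}\|$, taken only over the unit ball of $\bcN$, still dominates a positive function of $1-|z_{mk}|$. The obstacle is that the weak closure $\cM$ of $\bcN$ may be of type~II or III, so one cannot invoke the pure--state identity $\|\omega_\xi-\omega_\eta\|_{\cB(\cH)}=2\sqrt{1-|z|^2}$, nor reduce to it by Kaplansky density, since $\bcN$ is weakly dense only in $\cM$ and not in $\cB(\cH)$. Moreover the restriction of the excitations to any single $\cN_{n+1}$ records only the reduced, generically mixed, states and need not determine $|z_{mk}|$ at all (for instance, distinct unitaries $A_m\in\cN_n$ may yield the same reduction while $z_{mk}$ is small). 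What must be exploited is genericity through the minimal projections $E_p\in\cN_{p+1}$ at large levels $p$, together with the identity $\|E_pZ\|=\|Z^*\Omega\|$ for $Z\in\cN_p$ that already drove the proof of Proposition~\ref{lift}: from $E_p$ and $A_m,A_k$ one manufactures a unit test operator in the funnel that resolves the two rays, the collapse $E_p\,X\,E_p=\omega(X)E_p$ for $X\in\cN_p$ converting the flanked expressions into the scalars $z_{mk}$.

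Making this resolution quantitative is the step I expect to be the crux. In Proposition~\ref{lift} the decisive input was an \emph{exact} operator identity $A^*CA=B^*CB$, supplied by Lemma~\ref{basic} from a genuine linear relation among the states; under the mere Cauchy hypothesis that lemma is unavailable, and I would have to replace the left multiplication by $E_p$ used there with an estimate that survives without it and that I can control uniformly as $p\to\infty$. Here the fixed level $n$ and the uniform bound on $\|A_m\|$ are what keep the constructed test operators of bounded norm and the resulting estimate uniform; converting the exact $E_p$--computation into such a stable quantitative inequality, and thereby deducing $|z_{mk}|\to1$, is the single point on which the whole proposition turns.
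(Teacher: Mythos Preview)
Your reduction is sound and coincides with the paper's strategy: the proposition boils down to showing $|z_{mk}|=|\omega(A_m^*A_k)|\to 1$ as $m,k\to\infty$, and once that is in hand your conversion to strong operator convergence of $t_mA_m$ (via cyclicity of $\Omega$ for $\cN_n^{\,\prime}\cap\cN_{n+1}$, or equivalently via $\Omega$ separating for $\cN_n$ as the paper phrases it) is correct, as is the final $\|\omega_{A_m}-\omega_A\|\le 2\|t_m\xi_m-\xi\|$ estimate.

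The genuine gap is that you do not prove $|z_{mk}|\to 1$; you flag it as the crux and stop, and your two guesses about how to close it point in the wrong direction. There is no need to climb to large levels $p$ and no uniformity in $p$ to worry about: a single minimal projection $E_n\in\cN_{n+1}$ with $E_nCE_n=\omega(C)E_n$ for $C\in\cN_n$ suffices. Nor does one ``replace'' the multiplication by $E_n$ with an estimate; one \emph{postpones} it. The missing idea is the choice of test operator together with the commutation trick behind Lemma~\ref{basic}. For $X,Y\in\cN_n^{\,\prime}\cap\cN_{n+1}$ take $C=X^*A_mE_nA_m^*Y\in\cN_{n+1}$; since $X,Y$ commute with $A_m,A_l\in\cN_n$ one gets
\[
\omega_{A_l}(C)=\langle X\Omega,\,A_l^*A_mE_nA_m^*A_l\,Y\Omega\rangle,
\]
so the Cauchy hypothesis bounds $\langle X\Omega,(A_m^*A_mE_nA_m^*A_m-A_l^*A_mE_nA_m^*A_l)Y\Omega\rangle$ by $\sup_k\|A_k\|^2\,\|X\|\,\|Y\|\,\|\omega_{A_m}-\omega_{A_l}\|$. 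Cyclicity of $\Omega$ for $\cN_n^{\,\prime}\cap\cN_{n+1}$ and the uniform norm bound upgrade this to weak operator convergence $A_m^*A_mE_nA_m^*A_m-A_l^*A_mE_nA_m^*A_l\rightharpoonup 0$; only then does one multiply by $E_n$ on both sides, collapsing the expression to $(1-|z_{ml}|^2)E_n\rightharpoonup 0$. The identity $\|E_nZ\|=\|Z^*\Omega\|$ from Proposition~\ref{lift} is not used here at all; the mechanism is rather the commutant trick of Lemma~\ref{basic}, applied not to derive an exact operator identity but to convert the state norm into control of weak matrix elements before the $E_n$--collapse is performed.
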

\begin{proof}
One makes use again of the basic fact that there exists a minimal
projection $E_n \in \cN_{n+1}$ satisfying \
$E_n  C E_n = \omega(C) E_n$, $C \in \cN_n$. Hence, picking
$X,Y \in \cN_n^\prime \bigcap \cN_{n+1}$, one obtains the estimate for any 
$l,m \in \NN$ 
\begin{equation*}
\begin{split}
& |\langle X \Omega, A_m^* A_m E_n A_m^* A_m \, Y \Omega \rangle 
- \langle X \Omega, A_l^* A_m E_n A_m^* A_l \, Y \Omega \rangle| \\
& = |\omega_{A_m}(X^*A_m E_n A_m^* Y) -  \omega_{A_l}(X^*A_m E_n A_m^* Y)|
\leq  \sup_k \|A_k\|^2 \, \|X\| \|Y\| \, 
 \| \omega_{A_m} - \omega_{A_l} \| \, .
\end{split}
\end{equation*}
Since $\Omega$ is cyclic for $\cN_n^{\, \prime} \bigcap \cN_{n+1}$ it follows that 
$(A_m^* A_m E_n A_m^* A_m - A_l^* A_m E_n A_m^* A_l) \rightharpoonup 
0$ in the weak operator topology as \ $l,m \rightarrow \infty$. 
Multiplying this relation from the left and right by~$E_n$ one finds
that $(E_n - |\omega(A_m^* A_l)|^2 E_n) \rightharpoonup 0$ 
and consequently $|\omega(A_m^* A_l)| \rightarrow 1$ in this limit.
Moreover, since $\|  (A_l -  \omega(A_m^* A_l) A_m) \Omega \|^2
= (1 - |\omega(A_m^* A_l)|^2 )$ one obtains, taking 
scalar products between  $A_k \Omega$ and 
the vector under the norm,
$|\omega(A_k^* A_l) - \omega(A_m^* A_l)  \omega(A_k^* A_m) | 
\rightarrow 0 \, , $ uniformly in $k \in \NN$. Since $\TT$ is 
compact it follows from these facts 
that there exist $t_l, t_m \in \TT$ such that 
$| \omega(A_m^*A_l) - t_m t_l^{-1} | \rightarrow 0$ and,
making use of this information in the above vector norm, one arrives at 
 $\|  (t_l A_l -  t_m A_m) \Omega \| \rightarrow 0$
as $l,m \rightarrow \infty$. Since 
the sequence $t_m A_m \in \cN_n$, $m \in \NN$, is uniformly bounded and  
$\Omega$ is separating for
$\cN_n$ it is then clear that it converges 
in the strong operator topology and has a limit $A \in \cN_n$
since $\cN_n$ is complete. 
The remaining part of the statement follows from the simple estimate
$\| \omega_{A_m} - \omega_A \| 
\leq 2 \inf_{\, t \in \TT}  \|t A_m \Omega - A \Omega \| $. 
\end{proof}

From the point of view of physics 
this result says that the family of 
states~$\bomegaN$ is complete under the action of all
possible operations 
which can be performed in the given  
state~$\omega$ within bounded spacetime regions 
of arbitrary size.  We therefore say $\bomegaN$
is \textit{locally complete}. This set is thus    
a natural framework for the description of those states 
which can realistically be prepared in a given background.
Secondly, the above result shows that there exist
pointwise continuous sections of the lift from $\bomegaN$ 
to rays in $\bcN$. More explicitly, given any 
$\omega_A \in \bomegaN$, 
let~$\omega_{A_m }$ be any sequence of states,  
generated by uniformly bounded operators $A_m \in \cN_n$, $m \in \NN$, 
for some $n \in \NN$, which converges in norm to 
$\omega_A$. Then there is a sequence $t_m \in \TT$, $m \in \NN$, such
$t_m A_m \rightarrow A$ in the strong operator topology as 
$m \rightarrow \infty$. 
We will refer to this fact by saying the lift is  
 \textit{locally continuous} and use this term  
similarly for other functions on~$\bomegaN$. 

Next, we turn to the analysis of mixtures of 
states in $\bomegaN$. For any given $M \in \NN$, 
arbitrary states $\omega_{A_m} \in  \bomegaN$ 
and numbers $0 \leq p_m \leq 1$ summing up to $1$,
$m = 1, \dots , M$, we consider the convex 
combinations  $\sum_{m = 1}^M p_m \, \omega_{A_m}$. They form   
the algebraic convex hull $\mbox{Conv}\,\bomegaN$ of~$\bomegaN$.
In the subsequent proposition it is shown that $\bomegaN$ consists of the  
extreme points of $\mbox{Conv} \, \bomegaN$.

\begin{proposition}
Let $\omega_A \in \bomegaN$ and let 
$\sum_{m = 1}^M p_m \, \omega_{A_m} = \omega_A$, 
where $\omega_{A_m} \in \bomegaN$ and 
$p_m$ are positive numbers summing up to $1$, $m = 1, \dots , M$. Then
$\omega_{A_1} = \dots = \omega_{A_M} = \omega_A$. 
\end{proposition}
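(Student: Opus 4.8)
The plan is to reduce the quadratic hypothesis to a linear identity among the operators $A, A_1, \dots, A_M$ and then to convert that identity into an equality case of the Cauchy--Schwarz inequality.

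First I would rewrite the hypothesis. Since $\sum_{m=1}^M p_m\, \omega_{A_m} - \omega_A = 0$, Lemma~\ref{basic} gives the operator identity
\[
A^* C A = \sum_{m=1}^M p_m\, A_m^* C A_m \qquad (C \in \bcN).
\]
Choosing $n$ so large that $A, A_1, \dots, A_M \in \cN_n$, let $E_n \in \cN_{n+1}$ be a minimal projection with $E_n C E_n = \omega(C)\, E_n$ for $C \in \cN_n$. Inserting $C = A E_n$ above and multiplying from the left by $E_n$, the relations $E_n (A^* A) E_n = \omega(A^*A)\, E_n = E_n$ and $E_n (A_m^* A) E_n = \omega(A_m^* A)\, E_n$ collapse the identity to
\[
E_n A = \sum_{m=1}^M p_m\, \omega(A_m^* A)\, E_n A_m .
\]
Writing $Z = A - \sum_m p_m\, \omega(A_m^* A)\, A_m \in \cN_n$, this reads $E_n Z = 0$; since $\| E_n Z \|^2 = \| Z^* \Omega \|^2$ and $\Omega$ is separating for $\bcN$, it follows that $Z = 0$, that is $A = \sum_m p_m\, \omega(A_m^* A)\, A_m$.

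Next I would pair this linear identity with $A$ in the inner product $\langle X, Y \rangle \doteq \omega(X^* Y)$, which is positive definite on $\bcN$ precisely because $\Omega$ is separating. Applying $\omega(A^* \,\cdot\,)$ to both sides and using the normalisations $\omega(A^* A) = \omega(A_m^* A_m) = 1$ together with $\omega(A_m^* A) = \overline{\omega(A^* A_m)}$ yields
\[
1 = \sum_{m=1}^M p_m\, |\omega(A^* A_m)|^2 .
\]
By Cauchy--Schwarz each summand satisfies $|\omega(A^* A_m)| \leq 1$, while the weights $p_m$ are strictly positive and sum to $1$; hence this equality forces $|\omega(A^* A_m)| = 1$ for every $m$.

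The conclusion then follows from the equality case of Cauchy--Schwarz: as $A$ and $A_m$ are unit vectors for $\langle \cdot, \cdot \rangle$ with $|\langle A, A_m \rangle| = 1$, they must be proportional, so $A_m = t_m A$ for some $t_m \in \TT$. By the converse direction of Proposition~\ref{lift} this gives $\omega_{A_m} = \omega_A$ for all $m$, as claimed. I expect the crux to be the passage to the linear identity via the minimal projection $E_n$ — the device that turns the bilinear, Radon--Nikodym-type relation into a statement to which strict convexity and the equality case of Cauchy--Schwarz apply; once $\sum_m p_m\, |\omega(A^* A_m)|^2 = 1$ is in hand, the remaining steps are essentially forced.
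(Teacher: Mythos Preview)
Your argument is correct. Both the paper and you start from Lemma~\ref{basic} together with the minimal projection $E_n$ and the separating property of~$\Omega$, but the way you extract the conclusion differs. The paper inserts $C=E_n$ to obtain $A^*E_nA=\sum_m p_m\,A_m^*E_nA_m$, observes that $A^*E_nA$ and each $A_m^*E_nA_m$ are (nonzero) scalar multiples of minimal projections in $\cN_{n+1}$, and then invokes the fact that a minimal projection cannot be a nontrivial positive combination of different positive operators to force proportionality. You instead insert $C=AE_n$, which immediately collapses the identity to the \emph{linear} relation $A=\sum_m p_m\,\omega(A_m^*A)\,A_m$; pairing with $A$ and using that the $p_m$ are strictly positive gives $\sum_m p_m\,|\omega(A^*A_m)|^2=1$, and the equality case of Cauchy--Schwarz in the pre-Hilbert space $(\bcN,\omega(\cdot^*\cdot))$ finishes. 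Your route bypasses the minimal-projection indecomposability step and is slightly more elementary; the paper's route, on the other hand, makes the underlying ``purity'' mechanism more transparent by exhibiting $A^*E_nA$ as (a multiple of) a rank-one object that cannot be mixed.
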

\begin{proof}
Let $n \in \NN$ be big enough such that 
(the rays of) $A, A_1, \dots , A_M \in \cN_n$
and let \mbox{$E_n \in \cN_{n+1}$} be a minimal projection satisfying
$E_n C E_n = \omega(C) E_n$, $C \in \cN_n$. 
Now the input of the statement implies,  
making use of Lemma~\ref{basic}, 
$A^* E_n A = \sum_{m = 1}^M p_m \, A_m^* E_n A_m$.
Furthermore, $A^* E_n ACA^* E_n A = \omega_n(ACA^*) A^* E_n A $,
$C \in \cN_{n+1}$,  where $\omega_n$ is the chosen extension
of $\omega \rest \cN_n$ to a pure state on $\cN_{n+1}$.
Inserting $C=1$ and noticing that \
\mbox{$\omega_n(AA^*) = \omega(AA^*) = \|A^* \Omega \|^2 \neq 0$} since 
$\Omega$ is separating, it is clear that $A^* E_n A$
is a multiple of a minimal projection in $\cN_{n+1}$.
This applies equally to the operators  
$A_m^* E_n A_m$, $m = 1, \dots , M$. Since 
$p_m > 0$, $m = 1, \dots , M$, and  a 
minimal projection cannot be decomposed into 
a sum of different positive operators, it follows that \ 
$ \omega(AA^*) \, A_m^* E_n A_m =  \omega(A_mA_m^*) \, A^* E_n A$,
$m = 1, \dots , M$.  Multiplying these equalities from the 
left by $E_n A$ and making use again of the fact that 
$\Omega$ is separating for~$\cN_n$ and that $\omega(AA^*) \neq 0$, 
one arrives at 
$\omega(A A_m^*) A_m = \omega(A_mA_m^*) A$, $m = 1, \dots , M$.
Taking also into account the normalization  condition
$ \omega(A^*A) = \omega(A_1^* A_1) = \dots = \omega(A_M^* A_M) =1$,   
one concludes that $|\omega(A A_m^*)| =  \omega(A_mA_m^*) \neq 0$,
so there exist $t_m \in \TT$ such that 
$A_m = t_m A$, $m = 1, \dots , M$. The statement then follows. 
\end{proof}

\section{Transition probabilities}
\setcounter{equation}{0}

We have seen in the previous section that 
the states in~$\bomegaN$ have many 
properties in common with pure states, irrespective of the 
type of the underlying state~$\omega$ on $\bcN$. 
Further support for this interpretation is provided by 
the existence of a meaningful concept of transition 
probabilities on $\bomegaN$. \\[1.5mm]
\textbf{Definition:} Let $\omega_A, \omega_B \in \bomegaN$.
The transition probability between these states is given 
by $\omega_A \cdot \omega_B \doteq |\omega(A^*B)|^2$.
The states are said to be orthogonal if 
$\omega_A \cdot \omega_B = 0$. 
(Note that the definition is meaningful since 
the lifts  $\omega_A \mapsto \TT A$,    
$\omega_B \mapsto \TT B$ are  injective.) 

\vspace*{1.5mm}
Another generalization of the concept of transition 
probabilities  was proposed by Uhl\-mann in \cite{Uh}.
Our present definition of $\omega_A \cdot \omega_B$
differs from 
$\omega_A \overset{\mbox{\tiny \bf U}}{\cdot} \omega_B$, 
the quantity given by Uhlmann. In fact,      
$\omega_A \cdot \omega_B \leq 
\omega_A \overset{\mbox{\tiny \bf U}}{\cdot} \omega_B$, 
where one has equality for  arbitrary states in
$\bomegaN$ only if $\omega$ is pure. 
Whereas the Uhlmann concept has proved to be 
useful in mathematics, we believe that our present definition 
is more adequate in physics. Some basic properties are compiled 
in the following proposition.

\begin{proposition}
 Let $\omega_A, \omega_B \in \bomegaN$. Then \\[1.5mm]
(i) \hspace{6pt} $0 \leq \omega_A \cdot \omega_B \leq 1$    and 
 $\omega_A \cdot \omega_B = \omega_B \cdot \omega_A$.  \\[0.5mm]
(ii)  \hspace{4pt}  $\omega_A \cdot \omega_B \leq 
1 - (1/4) \, \| \omega_A - \omega_B \|^2$; 
equality holds for all $\omega_A, \omega_B$ iff 
$\omega$ is a pure state on $\bcN$. \\[0.5mm]
(iii) \hspace{-2pt} The map
$\omega_A, \omega_B \mapsto \omega_A \cdot \omega_B$ is locally
continuous. 
\end{proposition}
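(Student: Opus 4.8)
The plan is to treat the three parts in turn, reading the transition probability through the inner product $\langle X , Y \rangle \doteq \omega(X^* Y)$ that $\omega$ induces on $\bcN$. Since $\Omega$ is separating this form is strictly positive, hence a genuine scalar product, and the identity $\omega_A(C) = \langle A\Omega , C A\Omega\rangle$ exhibits $\omega_A$ as the vector state of the unit vector $A\Omega \in \cH$ restricted to $\bcN$. For (i) I argue directly: nonnegativity is immediate from $\omega_A \cdot \omega_B = |\omega(A^*B)|^2 \ge 0$; the bound $\omega_A \cdot \omega_B \le 1$ is Cauchy--Schwarz $|\langle A , B\rangle|^2 \le \langle A,A\rangle \langle B,B\rangle = \omega(A^*A)\,\omega(B^*B) = 1$ together with the normalization $\omega_A(1) = \omega_B(1) = 1$; and symmetry follows from $\omega(A^*B) = \overline{\omega(B^*A)}$, valid because $\omega$ is a hermitian state, upon taking moduli.

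For the inequality in (ii) the key observation is that $\omega_A - \omega_B$ is the restriction to $\bcN$ of the normal functional $\phi$ on $\cB(\cH)$ given by the selfadjoint trace--class operator $\rho \doteq |A\Omega\rangle\langle A\Omega| - |B\Omega\rangle\langle B\Omega|$ of rank at most two. A direct two--vector computation gives $\mathrm{Tr}\,\rho = 0$ and $\mathrm{Tr}\,\rho^2 = 2\,(1 - |\langle A\Omega , B\Omega\rangle|^2)$, so $\rho$ has eigenvalues $\pm\sqrt{1 - \omega_A\cdot\omega_B}$ and hence trace norm $\|\rho\|_1 = 2\sqrt{1 - \omega_A\cdot\omega_B}$, which is exactly the norm of $\phi$ as a functional on all of $\cB(\cH)$. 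Restricting a functional to the subspace $\bcN \subset \cB(\cH)$ cannot increase its norm, so $\|\omega_A - \omega_B\| \le 2\sqrt{1 - \omega_A\cdot\omega_B}$; squaring and rearranging yields the asserted bound.

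The delicate point, and the part I expect to be the main obstacle, is the equality clause. For the implication that purity forces equality for every pair, I would use that a pure $\omega$ has irreducible GNS representation, so $\bcN$ is weakly dense in $\cB(\cH)$; by Kaplansky's density theorem the unit ball of $\bcN$ is weak--operator dense in that of $\cB(\cH)$, and since $\phi$ is normal, hence weak--operator continuous on bounded sets, the supremum defining $\|\omega_A - \omega_B\|$ over $\bcN$ already equals $\|\phi\|_1$, giving equality throughout. For the converse I would show that if $\omega$ is not pure then $\cM' \neq \CC\,1$, and exploit a nontrivial partial isometry $W \in \cM'$ to produce, for a suitably chosen unit vector $u = A\Omega$ with $A \in \bcN$, a second unit vector $v_\alpha = (\cos\alpha)\,u + (\sin\alpha)\,Wu$ which induces the \emph{same} normal state as $u$ on $\cM$ while having overlap $|\langle u , v_\alpha\rangle| = |\cos\alpha| < 1$. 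Such a $v_\alpha$ need not lie in $\bcN\,\Omega$ --- indeed Proposition \ref{lift} forbids $\omega_A = \omega_B$ unless the lifts are parallel --- so the argument must be run with an approximant $B\Omega \approx v_\alpha$, $B \in \bcN$, furnished by the cyclicity of $\Omega$. Then $\|\omega_A - \omega_B\|$ can be made arbitrarily small while $\omega_A\cdot\omega_B \approx \cos^2\alpha$ stays bounded away from $1$, producing a pair with strict inequality. The careful control of these two estimates under the approximation, using that both quantities are continuous in the generating vectors, is where the real work lies.

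Finally, for (iii) I would invoke Proposition \ref{continuity}. Given norm--convergent sequences $\omega_{A_m} \to \omega_A$ and $\omega_{B_m} \to \omega_B$ generated by uniformly bounded $A_m, B_m \in \cN_n$, that proposition supplies phases $t_m, s_m \in \TT$ with $t_m A_m \to A$ and $s_m B_m \to B$ strongly, hence $(t_m A_m)\Omega \to A\Omega$ and $(s_m B_m)\Omega \to B\Omega$ in norm. Joint continuity of the scalar product then gives $|\omega(A_m^* B_m)| = |\langle (t_m A_m)\Omega , (s_m B_m)\Omega\rangle| \to |\langle A\Omega , B\Omega\rangle| = |\omega(A^*B)|$, the phases dropping out under the modulus, so $\omega_{A_m}\cdot\omega_{B_m} \to \omega_A\cdot\omega_B$. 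This is precisely local continuity of the product in both entries.
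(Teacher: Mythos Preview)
Your treatment of (i), of the inequality in (ii), of the forward implication in (ii), and of (iii) is correct and matches the paper's argument closely; your direct computation of the eigenvalues of $\rho$ is a self-contained substitute for the paper's citation of \cite{RoRo}, and your use of Kaplansky density plus normality of $\phi$ makes explicit what the paper leaves implicit when it says the norms on $\bcN$ and $\cB(\cH)$ agree.

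The gap is in the converse of the equality clause in (ii). Your vector $v_\alpha = (\cos\alpha)\,u + (\sin\alpha)\,Wu$ does \emph{not} induce the same state as $u$ on $\cM$. For $C \in \cM$ and $W \in \cM'$ one gets
\[
\langle v_\alpha , C v_\alpha\rangle
= \langle u , C\,(\cos\alpha + \sin\alpha\,W)^*(\cos\alpha + \sin\alpha\,W)\,u\rangle ,
\]
and the operator $(\cos\alpha + \sin\alpha\,W)^*(\cos\alpha + \sin\alpha\,W)$ equals $1$ only when $W$ is unitary \emph{and} the cross term $\cos\alpha\sin\alpha\,(W+W^*)$ vanishes, which for $\alpha \in (0,\pi/2)$ forces $W$ skew-adjoint, hence scalar. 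So the linear combination destroys exactly the property you need. The fix is to drop the combination altogether: the paper simply takes a nontrivial unitary $V \in \bcN^{\,\prime}$ and uses $V\Omega$ directly. Since $V$ commutes with $\bcN$, the vector $V\Omega$ induces $\omega$ itself on $\bcN$; since $\Omega$ is separating for $\bcN^{\,\prime}$ (being cyclic for $\bcN$), one has $|\langle \Omega, V\Omega\rangle| < 1$. Approximating $V\Omega$ by $A_m\Omega$ with $A_m \in \bcN$ then gives $\|\omega_{A_m} - \omega\| \to 0$ while $\omega_{A_m}\cdot\omega \to |\langle\Omega,V\Omega\rangle|^2 < 1$, and the routine estimate $|\omega_{A_m}(C) - \omega(C)| \le 2\|C\|\,\|A_m\Omega - V\Omega\|$ handles the approximation you flagged as ``the real work''. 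The actual difficulty was the construction of the comparison vector, and there your proposal goes astray.
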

\begin{proof}
Point (i) follows from the Schwarz inequality 
$0 \leq  |\omega(A^*B)|^2 \leq \omega(A^*A) \, \omega(B^*B) = 1$
and the fact that $\overline{\omega(A^*B)} = \omega(B^* A)$.
For the proof of point (ii) one makes use of the fact 
that the states in $\bomegaN$ can be extended to $\cB(\cH)$
in the GNS representation induced by $\omega$ and defines   
$\| \omega_A - \omega_B \|_{\cB(\cH} 
\doteq \sup_{X \in \cB(\cH)} |\langle A \Omega, X A \Omega \rangle 
- \langle B \Omega, X B \Omega \rangle|/ \| X \| $.
Since $\bcN \subset \cB(\cH)$ it is clear that 
$\| \omega_A - \omega_B \| \leq \| \omega_A - \omega_B \|_{\cB(\cH}$;
moreover, 
$ \omega_A \cdot \omega_B = |\langle A \Omega, B \Omega \rangle|^2
= 1 - (1/4) \,  \| \omega_A - \omega_B \|_{\cB(\cH}^{\, 2} \,$,
as has been shown in \cite{RoRo}. The inequality in (ii) then follows.
If $\omega$ is a pure state on $\bcN$, whereby
the weak closure of~$\bcN$ coincides with $\cB(\cH)$, 
one has $\| \omega_A - \omega_B \| = \| \omega_A - \omega_B \|_{\cB(\cH}$,
so equality in (ii) obtains in this case.
If $\omega$ is not pure, the commutant 
$\bcN^{\, \prime} \subset \cB(\cH)$ contains some non--trivial 
unitary operator~$V$. Since 
$\Omega$ is cyclic for $\bcN$ (by the GNS--construction)
it is separating for $\bcN^{\, \prime}$ and consequently
$|\langle \Omega, V \Omega \rangle| < 1$.
Now let $A_m \in \bcN$, $m \in \NN$, be any sequence such that
$A_m \Omega \rightarrow V \Omega$ strongly as $m \rightarrow \infty$.
Then $\lim_{m \rightarrow \infty} \| \omega_{A_m} - \omega \| = 0$
and $\lim_{m \rightarrow \infty} |\langle \Omega, A_m \Omega \rangle | < 1$,
so one cannot have equality in (ii) for arbitrary states, as claimed. 
The remaining point (iii) is an immediate consequence of the 
fact that the lifts $\omega_A \mapsto \TT A$ and 
$\omega_B  \mapsto \TT B$ are locally continuous, cf.\ the
remark after Proposition \ref{continuity}, and this continuity property is 
passed on to the transition probabilities.  
\end{proof}

Based on the concept of transition probability for the states in $\bomegaN$  
it is easy to exhibit complete families of orthogonal states.
For, by the Gram--Schmidt algorithm, one finds orthonormal systems 
of vectors  
$A_m \Omega \in \bcN \Omega$, $m \in \NN$, which are 
complete in $\cH$ since $\bcN \Omega$ is dense. In particular,
$\sum_m |\langle B \Omega, A_m \Omega \rangle |^2 = 1$
for every $B \in \bcN$ satisfying 
$\| B \Omega \| = 1$. 
Hence the states $\omega_{A_m} \in \bomegaN$, $m \in \NN$,
form a complete family of orthogonal states in the following 
sense. \\[2mm]
\textbf{Definition:} A family of orthogonal states 
$\omega_{A_m} \in \bomegaN$, $m \in \NN$, is said to be 
complete if $\sum_m \omega_B \cdot \omega_{A_m} = 1$
for every $\omega_B \in \bomegaN$. \\[2mm]
The existence of such complete families implies that one has a consistent
statistical interpretation of the transition probabilities
for the states in $\bomegaN$. 

\section{Algebra of states}
\setcounter{equation}{0}

We continue our analysis of the states $\bomegaN$ by showing that
their linear span $\mbox{Span} \, \bomegaN$
can be equipped with an associative product and a star operation, 
turning it into a *--algebra. Making use of the polarization
identity 
$\omega(A^*  C  B) =
(1/4) \, \sum_{j=0}^3 i^j \, \omega((A + i^j B)^* \, C \, (A + i^j B))$ 
for $A,B,C \in \bcN$, it is apparent that 
the functional $\omega(A^* \, \cdot \, B)$ on $\bcN$ 
is contained in $\mbox{Span} \, \bomegaN$. 
Whereas this functional depends on the 
operators \mbox{$A, B \in \bcN$}, the 
functional  in the subsequent definition does not depend on 
the specific choice of operators from the 
rays $\TT A$ and~$\TT B$, respectively. It is therefore well defined
on $\bomegaN \times \bomegaN$. \\[2mm] 
\textbf{Definition:} Let $\omega_A, \omega_B \in \bomegaN$. The
functional $\omega_A \times \omega_B  \in 
\mbox{Span} \, \bomegaN$ is defined by
$$  \omega_A \times \omega_B (C) =  \omega(A^*B) \, \omega(B^* C A) \, ,
\quad C \in \bcN \, . 
$$
Note that $\omega_A \times \omega_B (1) = \omega_A \cdot \omega_B$.  

\vspace*{2mm} In the subsequent proposition we show that the map
$\omega_A, \omega_B \mapsto \omega_A \times \omega_B$
extends linearly in both entries to an associative product
on $\mbox{Span} \, \bomegaN$. We also define an 
antilinear involution~$\dagger$ on this space. 
\begin{proposition} \label{algebra}
The map $\omega_A, \omega_B \mapsto \omega_A \times \omega_B$
from $\bomegaN \times \bomegaN$ to $\mbox{Span} \ \bomegaN$
extends linearly in both entries to an associative product
on $\mbox{Span} \, \bomegaN$,  given by 
\begin{equation} \label{product}
(\sum_{k=1}^K c_k \, \omega_{A_k}) \times (\sum_{l=1}^L d_l \, \omega_{B_l})
\doteq \sum_{k=1}^K \sum_{l=1}^L  c_k  d_l \  
\omega_{A_k} \times  \omega_{B_l} \, . 
\end{equation}
Moreover, the antilinear involution $\dagger$ on $\mbox{Span} \, \bomegaN$, 
given by 
\begin{equation} \label{adjoint} 
\big(\sum_{k=1}^K c_k \, \omega_{A_k} \big)^\dagger(C)
\doteq \sum_{k=1}^K \overline{c}_k \, \omega_{A_k}(C) \, , 
\quad C \in \bcN \, ,
\end{equation}
is algebraically compatible with this product. Equipped with these 
operations, $\mbox{Span} \, \bomegaN$ becomes a *--algebra,
denoted by $\bcC$. 
\end{proposition}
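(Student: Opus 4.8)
The plan is to move to the GNS data of $\omega$ and represent every functional in $\mbox{Span}\,\bomegaN$ by a finite--rank operator. Writing $\omega(A^*B) = \langle A\Omega, B\Omega\rangle$ and $\omega(B^*CA) = \langle B\Omega, CA\Omega\rangle$, the defining formula reads $\omega_A \times \omega_B(C) = \langle A\Omega, B\Omega\rangle\,\langle B\Omega, CA\Omega\rangle$, so $\omega_A$ is represented on $\bcN$ by the rank--one operator $|A\Omega\rangle\langle A\Omega|$ and $\omega_A \times \omega_B$ by $\langle A\Omega, B\Omega\rangle\,|A\Omega\rangle\langle B\Omega| = |A\Omega\rangle\langle A\Omega|\cdot|B\Omega\rangle\langle B\Omega|$. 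More generally $\sum_k c_k\,\omega_{A_k}$ is represented by $T \doteq \sum_k c_k\,|A_k\Omega\rangle\langle A_k\Omega|$ through $C \mapsto \mathrm{tr}(TC)$, and under this dictionary the bilinear prescription \eqref{product} is precisely multiplication of representing operators. Associativity of $\times$ will therefore be inherited from associativity of operator multiplication, and $\dagger$ will correspond to the operator adjoint. The substantive point, however, is well--definedness: since $\bcN \neq \cB(\cH)$ in general, a functional determines its representing operator only modulo the annihilator of $\bcN$, so I must verify that the product of functionals is independent of the representations chosen.

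By bilinearity it suffices to show that $\sum_k c_k\,\omega_{A_k} = 0$ forces both $\sum_k c_k\,\omega_{A_k}\times\omega_B = 0$ and $\sum_k c_k\,\omega_B\times\omega_{A_k} = 0$ for every $\omega_B\in\bomegaN$. For the first of these I would invoke Lemma~\ref{basic} together with the minimal projection $E_n$. Picking $n$ so large that $A_k, B, C \in \cN_n$ and using $E_n Z E_n = \omega(Z)\,E_n$ for $Z\in\cN_n$, one obtains $\omega(A_k^*B)\,\omega(B^*CA_k)\,E_n = E_n A_k^*\,(BE_nB^*C)\,A_k E_n$, whence
\[
\sum_k c_k\,\omega(A_k^*B)\,\omega(B^*CA_k)\,E_n
= E_n\Big(\sum_k c_k\,A_k^*\,(BE_nB^*C)\,A_k\Big)E_n = 0 ,
\]
the last equality by Lemma~\ref{basic} applied to $D = BE_nB^*C \in \bcN$. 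As $E_n \neq 0$, the first identity follows for every $C$.

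The hard part will be the second identity, and here I expect the genuine obstacle. Repeating the $E_n$--sandwich for $\omega_B\times\omega_{A_k}(C) = \omega(B^*A_k)\,\omega(A_k^*CB)$ produces the operator $\sum_k c_k\,A_k E_n A_k^*$ in reversed order $A_k(\,\cdot\,)A_k^*$, which is \emph{not} controlled by Lemma~\ref{basic}: the hypothesis constrains the vectors $A_k\Omega$, not $A_k^*\Omega$, and an attempt to prove a reversed--order version of the lemma fails for exactly this reason. My plan is to circumvent this via $\dagger$. First I note that $\big(\sum_k c_k\,\omega_{A_k}\big)^\dagger(C) = \overline{\big(\sum_k c_k\,\omega_{A_k}\big)(C^*)}$, which depends only on the functional and so makes $\dagger$ a well--defined antilinear involution of $\mbox{Span}\,\bomegaN$; in particular $\Phi = 0$ gives $\Phi^\dagger = 0$. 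A direct computation on generators yields $(\omega_A\times\omega_B)^\dagger = \omega_B\times\omega_A$. Combining these, $\sum_k c_k\,\omega_B\times\omega_{A_k} = \big(\sum_k \overline{c_k}\,\omega_{A_k}\times\omega_B\big)^\dagger$; since $\sum_k c_k\,\omega_{A_k} = 0$ implies $\sum_k \overline{c_k}\,\omega_{A_k} = 0$, the first identity already proven makes the inner sum vanish, hence so does the second. This also delivers the compatibility $(\Phi\times\Psi)^\dagger = \Psi^\dagger\times\Phi^\dagger$ by extending the generator identity antilinearly.

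With well--definedness secured, the remaining assertions are routine. Associativity follows because, for any representatives, $(\Phi\times\Psi)\times\Xi$ and $\Phi\times(\Psi\times\Xi)$ both equal the functional $C\mapsto\mathrm{tr}(T_\Phi T_\Psi T_\Xi\,C)$ on $\bcN$, any ambiguity in the middle factor lying in the annihilator of $\bcN$ and being killed by the well--definedness just established; associativity of operator multiplication does the rest. Finally, collecting the associative, separately well--defined product \eqref{product} with the antilinear involution $\dagger$ satisfying $(\Phi\times\Psi)^\dagger = \Psi^\dagger\times\Phi^\dagger$ exhibits $\mbox{Span}\,\bomegaN = \bcC$ as a $*$--algebra, completing the proof.
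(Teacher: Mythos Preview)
Your proof is correct. The first--entry well--definedness argument is essentially the paper's: both sandwich $\sum_k c_k A_k^* D A_k$ between copies of the minimal projection $E_n$ and invoke Lemma~\ref{basic}. Where you diverge is in the second entry. You perceive a genuine obstacle because Lemma~\ref{basic} controls $A_k^*(\,\cdot\,)A_k$ rather than $A_k(\,\cdot\,)A_k^*$, and you circumvent it elegantly via $\dagger$: since $(\omega_A\times\omega_B)^\dagger=\omega_B\times\omega_A$ as functionals and $\dagger$ is intrinsically well--defined, vanishing in the first entry forces vanishing in the second. This works, but the obstacle is illusory. The paper simply says ``an analogous argument'', and it really is analogous: if $\sum_l d_l\,\omega_{B_l}=0$, Lemma~\ref{basic} gives $\sum_l d_l\,B_l^*DB_l=0$ for all $D\in\bcN$; inserting $D=CA_kE_nA_k^*$, sandwiching with $E_n$, and tracing yields $\sum_l d_l\,\omega(B_l^*CA_k)\,\omega(A_k^*B_l)=0$ directly. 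So the two entries are symmetric after all, and no detour through $\dagger$ is needed---though yours is a pleasant alternative.

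For associativity and the $*$--compatibility, the paper proceeds by bare--hands computation (checking $(\omega_A\times\omega_B)\times\omega_C=\omega_A\times(\omega_B\times\omega_C)$ via the polarization identity, and similarly for $\dagger$), whereas you import these from operator multiplication and the operator adjoint through the finite--rank representation $\omega_A\leftrightarrow|A\Omega\rangle\langle A\Omega|$. Your route is more conceptual and in fact anticipates the paper's later Proposition~4.3(iii), which establishes exactly this spatial isomorphism with $\bcC_\Omega$; the paper's route is more self--contained at this stage. Both are sound.
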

\begin{proof}
For the proof that the definition (\ref{product}) is consistent 
we only need to show that the right hand side of this equation  
vanishes whenever either one of the sums on the left hand side vanishes. 
So let $\sum_{k=1}^K c_k \, \omega_{A_k} = 0$ and hence, by 
Lemma \ref{basic}, 
$\sum_{k=1}^K c_k \, A_k^* C A_k = 0$ for every $C \in \bcN$.
Given any functional $\sum_{l=1}^L d_l \, \omega_{B_l}$
and operator $C \in \bcN$, there is some $n \in \NN$ such that 
(the rays of) $A_k, B_l, C \in \cN_n$
for $k = 1, \dots , K$, $l = 1, \dots , L$. 
Let $E_n \in \cN_{n+1}$ be a non--trivial minimal projection
such that $E_n C E_n = \omega(C) E_N$, $C \in \cN_n$, and let 
$\tau_n$ be the standard semifinite trace
on $\cN_{n+1}$. Replacing in the above equation
$C$ by $\sum_{l=1}^L d_l \, B_l E_n B_l^* \, C \in \cN_{n+1}$ 
one gets 
$\sum_{k=1}^K \sum_{l=1}^L c_k d_l \, 
A_k^* B_l E_n B_l^* C  A_k = 0$ and multiplying the latter equation from the 
left and right by $E_n$ and evaluating its trace $\tau_n$ one
arrives at 
\begin{equation*}
\begin{split}
 0  & =  \sum_{k=1}^K \sum_{l=1}^L c_k d_l \, 
\tau_n (E_n A_k^* B_l E_n B_l^* C  A_k E_n) \\
&  =   \sum_{k=1}^K \sum_{l=1}^L c_k d_l \, 
\omega(A_k^* B_l)  \omega(B_l^* C A_k) \, \tau_n(E_n)  
  = \sum_{k=1}^K \sum_{l=1}^L  c_k  d_l \,   
\omega_{A_k} \times  \omega_{B_l} (C) \, . 
\end{split}
\end{equation*}
Since $C \in \bcN$ was arbitrary, this shows that the right hand side
of relation 
(\ref{product}) vanishes whenever the first sum on the left hand side
vanishes, and an analogous argument leads to the same conclusion 
if the second sum vanishes. It is then obvious that the 
product is linear in  both entries. 
The proof that the product is associative 
is a consequence of the relations for
$A,B,C,D \in \bcN$ 
\begin{equation*}
\begin{split}
& \big( \omega(A^*B)  \, \omega(B^* \, \cdot \, A)
\times \omega_C \big)(D) = \omega(A^*B) \, \omega(B^* C) \, 
\omega(C^* D A) \, , \\
& \big( \omega_A \times   \omega(B^*C)   \, \omega(C^* \, \cdot \, B) \big)(C)
= \omega(B^*C) \, \omega(A^*B) \, \omega(C^* D A) \, , 
\end{split}
\end{equation*}
which follow from the definition of the product and the polarization identity
for the functionals 
$\omega(B^* \, \cdot \, A)$ and $\omega(C^* \, \cdot \, B)$.
Finally, relation (\ref{adjoint}) and 
the polarization identity imply for  
$A,B,C \in \bcN$, $c,d \in \CC$, 
\begin{equation*}
\begin{split}
& \big(c \, \omega_A \times d \, \omega_B \big)^\dagger(C) 
=  (1/4) \sum_{j = 0}^3 \overline{cd \, \omega(A^*B) \, i^j } \, 
\omega((B + i^j A)^* C (B + i^j A)) \\
& = (\overline{cd} /4) \sum_{j = 0}^3 \omega(B^*A) \, (-i)^j  \, 
\omega((A + (-i)^j B)^* C (A + (-i)^j B))
= \big(\overline{d} \, \omega_B \times \overline{c} \, \omega_A \big)(C) \, ,
\end{split}
\end{equation*}
proving the algebraic compatibility of the antilinear involution
$\dagger$ with the product $\times$ on 
$\mbox{Span} \ \bomegaN$. The conclusion then follows. 
\end{proof}

\vspace*{2mm}
Since the states  
$\omega_A \in \bomegaN$ satisfy $\omega_A \times \omega_A = \omega_A$
and $\omega_A^{\ \dagger} = \omega_A$, they correspond to 
symmetric projections in $\bcC$. Moreover, they 
satisfy $\omega_A \times \omega_C \times \omega_A = 
(\omega_A \cdot \omega_C) \ \omega_A$,~$\omega_C \in \bomegaN$, 
and hence are minimal projections.  It is also 
clear that $\omega_A \times \omega_B = 0$ 
iff $\omega_A, \omega_B$ are orthogonal states. 
We will show next that any 
symmetric element of  $\bcC$ can be
decomposed into a sum of orthogonal minimal projections with real
coefficients (``spectral theorem'').
\begin{proposition} \label{spectraltheorem}
Let $\psi = \psi^\dagger \in \bcC$.
There exist mutually orthogonal states $\omega_{A_m} \in \bomegaN$ 
and coefficients $r_m \in \RR$, $m= 1, \dots , M$,
such that $\psi = \sum_{m=1}^M r_m \, \omega_{A_m}$. 
If $\psi \in \mbox{Conv} \ \bomegaN \subset \bcC$, then 
$r_m \geq 0$, $m= 1, \dots , M$, and \ $\sum_{m=1}^M r_m = 1$;
hence every ``mixed state'' in $\mbox{Conv} \ \bomegaN$ 
can be decomposed into a convex combination of orthogonal 
``pure states'' in $\bomegaN$.
\end{proposition}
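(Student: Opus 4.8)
The plan is to realise the abstract $*$-algebra $\bcC$ concretely inside one of the type~I$_\infty$ factors of the funnel, so that the claimed decomposition becomes the ordinary spectral theorem for a self-adjoint element of a finite-dimensional C${}^*$-algebra. Write $\psi=\sum_{k=1}^{K}c_k\,\omega_{A_k}$ and fix $n$ so large that all $A_k\in\cN_n$. I would represent the relevant part of $\bcC$ by sending a state $\omega_A$ with $A\in\cN_n$ to the operator $\Psi(\omega_A)\doteq AEA^*$, where $E\in\cN_{n+2}$ is a minimal projection with $EDE=\omega(D)\,E$ for $D\in\cN_{n+1}$ (produced exactly like the projections $E_n$ above, only one funnel level higher), and extend $\Psi$ linearly. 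Since $\omega(A^*A)=1$ one checks that $\Psi(\omega_A)$ is a minimal projection in $\cN_{n+2}$, and the identity $E(A^*B)E=\omega(A^*B)E$ gives $\Psi(\omega_A)\Psi(\omega_B)=\omega(A^*B)\,AEB^*$, so $\Psi$ intertwines the product $\times$ with operator multiplication and $\dagger$ with the adjoint; in particular $\Psi(\omega_A)\Psi(\omega_B)=0$ exactly when $\omega_A,\omega_B$ are orthogonal.

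The step I expect to be the main obstacle is proving that $\Psi$ is well defined and, above all, \emph{faithful}, since only then can the spectral data be pulled back to $\bcC$. Well-definedness is immediate from Lemma~\ref{basic}: if $\sum_k c_k\,\omega_{A_k}=0$ then $\sum_k c_k A_k^*CA_k=0$ for every $C$, and pairing $\Psi(\psi)$ against the standard trace $\tau$ on $\cN_{n+2}$ (normalised so that $\tau(E)=1$) reduces $\tau(\Psi(\psi)\,C)$ to a state evaluated on this vanishing operator. The reason I place $E$ at level $n+2$ rather than $n+1$ is that the trace pairing then recovers $\psi$ on the whole of $\cN_{n+1}$, namely $\tau(\Psi(\omega_A)\,C)=\omega(A^*CA)=\omega_A(C)$ for all $C\in\cN_{n+1}$. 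Hence $\Psi(\psi)=0$ forces $\psi$ to vanish on $\cN_{n+1}$; running the computation of Lemma~\ref{basic} with $X,Y\in\cN_n^{\,\prime}\bigcap\cN_{n+1}$ — which is legitimate because $\Omega$ is cyclic for this algebra — then upgrades this to the operator identity $\sum_k c_k A_k^*CA_k=0$ for all $C\in\cN_n$, and the tensor factorisation $\cN_N=\cN_n\otimes(\cN_n^{\,\prime}\bigcap\cN_N)$ finally yields $\psi=0$ on all of $\bcN$. Thus $\Psi$ is injective.

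With faithfulness established the remainder is routine. Since $\Psi(\omega_{A_k})=A_kEA_k^*$, the element $T\doteq\Psi(\psi)=\sum_k c_k A_kEA_k^*$ lies in the finite-dimensional $*$-algebra $\cB\doteq\mbox{span}\{A_iEA_j^*:1\le i,j\le K\}\subset\cN_{n+2}$, which is closed under products by the relation $(A_iEA_j^*)(A_kEA_l^*)=\omega(A_j^*A_k)\,A_iEA_l^*$, and $T$ is self-adjoint there because $\psi=\psi^\dagger$. The finite-dimensional spectral theorem gives $T=\sum_{m=1}^{M}r_m\,p_m$ with $r_m\in\RR$ and mutually orthogonal minimal projections $p_m$ of $\cB$. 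Each minimal projection of $\cB$ has the form $AEA^*$ with $A=\sum_k\lambda_kA_k\in\cN_n$, and the normalisation $\omega(A^*A)=1$ is forced by $(AEA^*)^2=AEA^*$; hence $p_m=\Psi(\omega_{A_m})$ with $\omega_{A_m}\in\bomegaN$, and $p_mp_{m'}=0$ is exactly orthogonality of $\omega_{A_m}$ and $\omega_{A_{m'}}$. Applying $\Psi^{-1}$ to $T=\sum_m r_m p_m$ gives $\psi=\sum_{m=1}^{M}r_m\,\omega_{A_m}$ with $r_m\in\RR$ and the $\omega_{A_m}$ mutually orthogonal.

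For the last assertion, suppose $\psi\in\mbox{Conv}\,\bomegaN$. Then $T=\Psi(\psi)$ is a convex combination of the projections $\Psi(\omega_{A_k})$, hence a positive operator, so every eigenvalue satisfies $r_m\ge0$; and since $\tau(\Psi(\omega_A))=\omega(A^*A)\,\tau(E)=1$ for every state, taking the trace of $T=\sum_m r_m p_m$ gives $\sum_{m=1}^{M}r_m=\tau(T)=1$. This is the asserted decomposition of a mixed state into a convex combination of orthogonal pure states.
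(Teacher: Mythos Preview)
Your proof is correct, and the overall strategy matches the paper's: represent each $\omega_A$ by a rank--one projection $A\,P\,A^*$ for a suitable minimal projection $P$, diagonalize the resulting self--adjoint finite--rank operator, and read off the orthogonal decomposition. The difference lies in the choice of $P$. The paper takes $P = E_\Omega$, the projection onto $\CC\,\Omega$ in the ambient $\cB(\cH)$; the trace identity $\tau(\Psi\, C) = \psi(C)$ then holds for \emph{every} $C \in \bcN$, so the diagonalization $\Psi = \sum_m r_m\, A_m E_\Omega A_m^*$ immediately gives $\psi = \sum_m r_m\, \omega_{A_m}$ on all of $\bcN$ with no further work. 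Your choice $P = E \in \cN_{n+2}$ keeps the construction inside the funnel, which is more intrinsic (and is in the spirit of the localizing maps the paper mentions at the end of Section~4), but the trace identity then recovers $\psi$ only on $\cN_{n+1}$, forcing the extra faithfulness/extension step via Lemma~\ref{basic} and the tensor factorization. Both routes are sound; the paper's is shorter, yours avoids stepping outside $\bcN$. For the positivity clause the paper computes $r_l = (\psi \times \omega_{A_l})(1)$ directly from the $\times$--product, whereas your argument via positivity of $T$ and the trace normalization is an equally valid alternative.
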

\begin{proof} If $\psi = 0$ the statement holds trivially. So
let $\psi = \sum_{m = 1}^M c_m \, \omega_{B_m}  = \psi^\dagger$ where,
without loss of generality, the projections
$\omega_{B_m} \in \bcC$ are linearly independent and 
the coefficients $c_m$ are real and different
from $0$ for $m = 1, \dots , M$. 
A given $\psi$ has in general different decompositions of
this type. Fixing any one, one 
considers the one--dimensional projections
$B_m E_\Omega B_m^* \in \cB(\cH)$, \mbox{$m = 1, \dots , M$}, 
where $E_\Omega$ is the projection onto the ray 
$\CC \, \Omega \in \cH$. 
Denoting by 
$\tau$ the trace on~$\cB(\cH)$ one obtains 
$\tau\big((\sum_{m=1}^M  d_m \, B_m E_\Omega B_m^*) \, C \big)
=  \sum_{m=1}^M  d_m \, \omega_{B_m}(C)$ for $C \in \bcN$
and $d_m \in \CC$, $m = 1, \dots , M$. 
Thus the projections $B_m E_\Omega B_m^*$, $m = 1, \dots , M$, 
are linearly independent as 
well. Moreover, putting 
$\Psi \doteq \sum_{m=1}^M c_m \, B_m E_\Omega B_m^* \in \cB(\cH)$, 
one recovers  the given functional, 
$\tau\big( \Psi \, C \big) = \psi(C)$,  $C \in \bcN$.

The vectors $B_m\Omega \in \cH$, $m = 1, \dots , M$, 
span an $M$--dimensional subspace $\cH_M \subset \cH$ which is 
stable under the action of the self--adjoint operator~$\Psi$. 
Hence there exists a non--singular matrix 
$S_{m \, l}$, $m,l = 1, \dots, M$, such that 
the vectors 
$\sum_{l=1}^M S_{m \, l} \,  B_l \, \Omega \in \cH_M$, $m = 1, \dots, M$, 
are orthogonal, normalized and diagonalize~$\Psi \rest \cH_M$, \ie
$(\Psi - r_m 1) \, \big(\sum_{l=1}^M S_{m \, l}   B_l \Omega \big) =  0$, 
where $r_m \in \RR$, $m = 1, \dots, M$.  
Hence $\Psi$ can be represented in the form 
$\Psi = \sum_{m=1}^M r_m \, A_m E_\Omega A_m^*$, where 
the operators 
$A_m \doteq \sum_{l=1}^M S_{m,l} \,  B_l $, $m = 1, \dots, M$,
are elements of $\bcN$. It follows that
$\tau(\Psi \, C) =  \sum_{m=1}^M r_m \, \omega_{A_m}(C)$, 
$C \in \bcN$. Since by construction 
$\omega_m \cdot \omega_l = |\langle A_m \Omega, A_l \Omega \rangle |^2
= \delta_{m,l}$, $l,m = 1, \dots, M$, this establishes the desired 
decomposition of~$\psi$. 

Finally, let $\psi = \sum_{m = 1}^M c_m \, \omega_{B_m}$, where 
$c_m \geq 0$, $m = 1, \dots, M$. % and $\sum_{m=}^M c_m = 1$. 
Then 
$$
(\psi \times \omega_{A_l})(1)
= \sum_{m = 1}^M c_m  \, (\omega_{B_m} \times \omega_{A_l})(1) =
\sum_{m = 1}^M c_m  \, \omega_{B_m} \cdot \omega_{A_l} \geq 0 \, ,
\quad l=1, \dots , M \, .
$$
Making use of this information in the orthogonal decomposition
$\psi = \sum_{m=1}^M r_m \, \omega_{A_m}$, 
it follows that $r_l = (\psi \times \omega_{A_l})(1) \geq 0$,
$l = 1, \dots , M$. Since the states in $\mbox{Conv} \ \bomegaN$ are
normalized one also has $1 = \psi(1) = \sum_{m=1}^M r_m$,
completing the proof of the theorem. 
\end{proof}

We conclude this discussion of the algebraic properties 
of \ $\bcC = \mbox{Span} \ \bomegaN$ by introducing a 
left and right action of $\bcN$ on this space. \\[2mm]
\textbf{Definition:} \  
Let $\psi \in \bcC$. The left, respectively, right actions 
of $A \in \bcN$ on $\psi$ are given by
\begin{equation*}
(A \times \psi)(C) \doteq \psi(AC) \quad \mbox{and} \quad 
(\psi \times A)(C) \doteq \psi(CA) \, , \qquad  C \in \bcN \, .
\end{equation*}
With this definition $\bcC$ becomes an $\bcN$--bimodule. 

\vspace*{2mm} 
Since $\bcC$ does not contain an 
identity, the underlying algebra $\bcN$ does not
correspond to a subalgebra of~$\bcC$. Yet one can recover 
the operators in  $\bcN$ as ``weak 
limits'' of operators in~$\bcC$. The appropriate topology is 
induced by the states in $\bomegaN$  which determine 
elements of the dual space of $\bcC$, defined subsequently. 
It follows from Proposition~\ref{algebra} that this definition
is consistent. \\[2mm]
\textbf{Definition:} \ Let $\omega_A \in \bomegaN$. Its 
dual action \ $\omega_A : \bcC \rightarrow \CC$ \ is defined by 
$$  \omega_A(\psi) \doteq (\omega_A \times \psi)(1) \, ,
\quad \psi \in \bcC \, . 
$$
The dual action of the elements of 
$\, \mbox{Conv} \, \bomegaN$ on $\bcC$ is defined analogously. 

\vspace*{2mm}
Making use of   Proposition \ref{algebra} one obtains by a straightforward 
computation the basic equalities for 
any $\omega_A, \omega_B, \omega_C , \omega_D \in \bomegaN$
\begin{eqnarray*}
& (\omega_A \times \omega_B \times \omega_C)(1)
= \omega(A^* B) \, \omega(B^* C) \omega(C^* A) \, , & \\
& (\omega_A \times \omega_B \times \omega_C \times \omega_D)(1)
= \omega(A^* B) \, \omega(B^* C) \omega(C^* D) \omega(D^* A) \, , &
\end{eqnarray*}
and similarly for higher products. 
Note that the numerical values of these expressions do
not change under cyclic permutations of the operators $A,B,C,D$. 
These relations are a key ingredient in the proof of the 
following result.
\begin{proposition} \ 
(i) Let $\omega_A \in \bomegaN$, 
The map  \ $\omega_A : \bcC \rightarrow \CC$  
satisfies $\omega_A(\psi^\dagger \times \psi) \geq 0$, $\psi \in \bcC$, 
and hence is a positive linear functional on $\bcC$.  \\[1mm] 
(ii) \hspace{-2.1pt}  The GNS--representation of $\bcC$ induced by the 
underlying state $\omega$ is faithful. \\[1mm]
(iii) \hspace{-4.4pt} There exists a spatial isomorphism between the,
as in (ii) represented, algebra~$\bcC$ and 
the algebra~$\bcC_\Omega \subset \cB(\cH)$ which is generated by 
the projections $\{ A E_\Omega A^* : A \in \bcN \, , \ \| A \Omega \|= 1 \}$,
where $E_\Omega$ denotes the projection onto the ray $\CC \Omega \subset \cH$.
\end{proposition}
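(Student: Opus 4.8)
The plan for (i) is a direct computation. Writing a general element as $\psi = \sum_k c_k\,\omega_{B_k}$, relation (\ref{adjoint}) gives $\psi^\dagger = \sum_k \overline{c}_k\,\omega_{B_k}$, and by associativity of $\times$ together with the definition of the dual action one has $\omega_A(\psi^\dagger \times \psi) = (\omega_A \times \psi^\dagger \times \psi)(1)$. Expanding and inserting the triple product formula $(\omega_A \times \omega_{B_k} \times \omega_{B_l})(1) = \omega(A^*B_k)\,\omega(B_k^* B_l)\,\omega(B_l^* A)$ recorded just before the proposition, this equals $\sum_{k,l} \overline{c}_k c_l\, \omega(A^*B_k)\,\omega(B_k^* B_l)\,\omega(B_l^* A)$. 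Putting $u_l \doteq c_l\,\omega(B_l^* A)$, so that $\overline{u}_k = \overline{c}_k\,\omega(A^*B_k)$, and recalling $\omega(B_k^* B_l) = \langle B_k \Omega, B_l \Omega\rangle$, the double sum becomes $\sum_{k,l} \overline{u}_k u_l\,\langle B_k\Omega, B_l\Omega\rangle = \big\| \sum_l u_l\, B_l\Omega \big\|^2 \geq 0$. Since linearity of $\omega_A$ on $\bcC$ is clear from the bilinearity of $\times$, this proves (i); equivalently, (i) licenses the GNS construction used in (ii).

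For (ii) and (iii) I would make everything rest on a single map $\Phi$ sending $\sum_k c_k\,\omega_{A_k} \in \bcC$ to the finite rank operator $\sum_k c_k\, A_k E_\Omega A_k^* \in \cB(\cH)$. By the trace identity already obtained in the proof of Proposition~\ref{spectraltheorem}, namely $\tau\big( \Phi(\psi)\, C \big) = \psi(C)$ for all $C \in \bcN$, the vanishing of $\Phi(\psi)$ forces $\psi = 0$. The substantive step, and the main obstacle, is the reverse implication that $\Phi$ is well defined: because the weak closure of $\bcN$ need not be all of $\cB(\cH)$, the vanishing of $\tau(\Phi(\psi)C)$ for every $C \in \bcN$ does not by itself give $\Phi(\psi) = 0$, so the matrix elements of $\Phi(\psi)$ must be produced by hand. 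I would argue exactly as in the consistency proof of Proposition~\ref{algebra}: if $\sum_k c_k\,\omega_{A_k} = 0$, choose $n$ with all $A_k \in \cN_n$ and a minimal projection $E_n \in \cN_{n+1}$ with $E_n C E_n = \omega(C) E_n$, $C \in \cN_n$; then Lemma~\ref{basic} gives $\sum_k c_k\, A_k^* C A_k = 0$ for all $C \in \bcN$. For $P, Q \in \cN_n$ substitute $C = Q E_n P^* \in \cN_{n+1}$ and compress by $E_n$ on both sides; using $E_n A_k^* Q E_n = \omega(A_k^* Q) E_n$ and $E_n P^* A_k E_n = \omega(P^* A_k) E_n$ one obtains $\sum_k c_k\, \omega(P^* A_k)\,\omega(A_k^* Q) = 0$, that is, $\langle P\Omega, \Phi(\psi)\, Q\Omega\rangle = 0$. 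Letting $P,Q$ range over $\bcN$ (enlarging $n$ as needed) and invoking density of $\bcN\Omega$ yields $\Phi(\psi) = 0$, so $\Phi$ is well defined and injective.

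What remains is bookkeeping. The identity $E_\Omega A^* B E_\Omega = \omega(A^*B)E_\Omega$ gives $\Phi(\omega_A)\Phi(\omega_B) = \omega(A^*B)\, A E_\Omega B^*$, whose trace against any $C \in \bcN$ equals $\omega(A^*B)\,\omega(B^* C A) = (\omega_A \times \omega_B)(C)$; since every element of $\bcC$ is literally a functional on $\bcN$, two operators in $\bcC_\Omega$ with equal traces against all of $\bcN$ are $\Phi$ of the same functional and hence coincide, forcing $\Phi(\psi \times \phi) = \Phi(\psi)\Phi(\phi)$ and, similarly, $\Phi(\psi^\dagger) = \Phi(\psi)^*$. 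A polarization of the generators $A E_\Omega A^*$ shows that the range of $\Phi$ is exactly $\bcC_\Omega$, so $\Phi$ is a $*$-isomorphism $\bcC \to \bcC_\Omega$. For the spatial statement I would use $\omega = \omega_1$, so the trace identity reads $\omega(\chi) = \tau(\Phi(\chi)) = \langle \Omega, \Phi(\chi)\Omega\rangle$, whence the GNS inner product is $\langle \psi, \phi\rangle_\omega = \omega(\psi^\dagger \times \phi) = \langle \Phi(\psi)\Omega, \Phi(\phi)\Omega\rangle$. Thus $V : [\psi] \mapsto \Phi(\psi)\Omega$ is a well defined isometry of $\bcC/\cN_\omega$ (with $\cN_\omega$ the Gelfand ideal of $\omega$) onto the dense subspace $\bcC_\Omega \Omega = \bcN\Omega$, and it extends to a unitary $\overline{V} : \cH_\omega \to \cH$ satisfying $\overline{V}\,\pi_\omega(\psi)\,\overline{V}^{\,*} = \Phi(\psi)$, since $\overline{V}\,\pi_\omega(\psi)[\phi] = \Phi(\psi\times\phi)\Omega = \Phi(\psi)\,\overline{V}[\phi]$. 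Faithfulness in (ii) is then immediate from injectivity of $\Phi$, and $\overline{V}$ is precisely the spatial isomorphism asserted in (iii) between the GNS-represented algebra $\bcC$ and $\bcC_\Omega$.
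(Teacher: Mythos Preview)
Your argument is correct. Part (i) is identical to the paper's. For (ii) and (iii) you take a genuinely different and more unified route: you build the $*$-isomorphism $\Phi : \bcC \to \bcC_\Omega$ first, establish its well-definedness via Lemma~\ref{basic} and the minimal-projection compression (exactly the mechanism from the consistency proof of Proposition~\ref{algebra}), and then read off both faithfulness and the spatial isomorphism from~$\Phi$. The paper instead proves (ii) by a direct argument that uses the spectral decomposition of Proposition~\ref{spectraltheorem} to reduce to the case $\psi = \sum_m r_m\,\omega_{A_m}$ with mutually orthogonal $\omega_{A_m}$, and then extracts a contradiction by plugging in $\omega_A$ with $A = A_l$ or $A = c\,1 + A_l$; only afterwards does it define the intertwiner $W|\psi\rangle \doteq \Phi(\psi)\Omega$ (your $V$ is literally the same map, since $\Phi(\sum_m c_m\,\omega_{A_m})\Omega = \sum_m c_m\,\omega(A_m^*)\,A_m\Omega$) and check the intertwining relation by hand. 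Your approach avoids the appeal to Proposition~\ref{spectraltheorem} and makes the algebra isomorphism the organizing object, so that multiplicativity, the $*$-property, faithfulness, and the spatial implementation all fall out of the single trace identity $\tau(\Phi(\psi)C) = \psi(C)$; the paper's approach trades that structural economy for an elementary, self-contained verification of faithfulness that does not require checking $\Phi$ is a homomorphism.
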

\begin{proof}
(i) \  Let $\psi = \sum_m^M c_m \, \omega_{B_m}  \in \bcC$. 
Then, by the first basic  equality given above, 
$$ \omega_A(\psi^\dagger \times \psi)
= \sum_{k=1}^M \sum_{l=1}^M  \overline{c}_k c_l \, 
(\omega_A \times \omega_{B_k} \times \omega_{B_l}) (1) 
= \sum_{k=1}^M \sum_{l=1}^M  \overline{c}_k c_l \, 
\omega(A^* B_k)  \omega(B_k^* B_l)  \omega(B_l^* A) \geq 0 \, ,
$$
where the asserted positivity follows from the relation 
$\overline{\omega(B_k^* A)} = \omega(A^* B_k)$
and the fact that $\omega(B_k^* B_l)$ is a non--negative matrix, 
$k,l = 1, \dots , M$. 

For the proof of (ii), let $0 \neq \psi \in \bcC$  
such that $\omega(\omega_A \times \psi \times \omega_B) = 0$
for any choice of states $\omega_A, \omega_B \in \bomegaN$.
Taking the complex conjugate of this equality gives 
$\omega(\omega_B \times \psi^\dagger \times \omega_A) = 0$
for $\omega_A, \omega_B \in \bomegaN$. Hence it suffices 
to consider the case $\psi^\dagger = \psi$. According to 
Proposition \ref{spectraltheorem}, the given $\psi$ can then be brought 
into the form $\psi = \sum_{m=1}^M r_m \, \omega_{A_m}$ where
the states $\omega_{A_m}$ are mutually orthogonal and $r_m \neq 0$,
$m = 1, \dots, M $. Making use of the second basic equality  
given above, one gets
$$
0 = \omega(\omega_A \times \psi \times \omega_A) =
\sum_{m=1}^M \, r_m (\omega \times \omega_A \times \omega_{A_m} \times \omega_A)(1)
= \sum_{m=1}^M r_m \, |\omega(A)|^2 \, |\omega(A_m^* A)|^2 \, .
$$ 
If $\omega(A_l) \neq 0$ for some $l \in \{1, \dots , M \}$ one puts
$A = A_l$ into this equality, giving $r_l = 0$,  in conflict with the
assumption that $\psi \neq 0$. If $\omega(A_m) = 0$, $m = 1, \dots , M$,
one puts $A = c 1 + A_l$ for some~$c \neq 0 $ and arrives at the same 
conclusion. Thus there exists no non--zero operator $\psi \in \bcC$
which vanishes in the GNS--representation induced by $\omega$,
hence this representation is faithful. 
But note that $\omega$ is not a faithful state on $\bcC$. 

As to the proof of (iii), recall that the GNS representation of $\bcN$
induced by $\omega$ consists of two ingredients: 
(a) a vector space $\cH_\omega$ spanned by the equivalence classes 
$| \psi \rangle$ of elements of~$\bcC$,   
modulo the left ideal which 
annihilates $\omega$, which is equipped with a scalar product 
$\langle \psi_1 | \psi_2 \rangle \doteq \omega(\psi_1^\dagger \times \psi_2)$;
\ (b) a homomorphism $\pi$ from $\bcN$ to bounded operators acting
on $\cH_\omega$, given by 
$\pi(\psi_1) \, | \psi_2 \rangle \doteq | \psi_1 \times \psi_2 \rangle $,
$\psi, \psi_1, \psi_2 \in \bcN$. After these preparations it is 
straightforward to prove that the mapping
$W : \cH_\omega \rightarrow \cH$ defined by 
$$
W \, | \sum_{m = 1}^M c_m \, \omega_{A_m} \rangle \doteq
\sum_{m = 1}^M c_m \, \omega(A_m^*)  \, A_m \Omega \, ,
\qquad \sum_{m = 1}^M c_m \, \omega_{A_m}  \in \bcC \, ,
$$
is an isomorphism with dense domain in $\cH_\omega$ 
and dense range in $\cH$. In fact,
\begin{equation*}
\begin{split}
& \langle  \sum_{l = 1}^M c_l \, 
\omega_{A_l} | \sum_{m = 1}^M  \, c_m \omega_{A_m} \rangle 
 =  \sum_{l=1}^M \sum_{m = 1}^M  \overline{c}_l c_m 
\, \omega(\omega_{A_l} \times  \omega_{A_m}) \\
& =
\sum_{l=1}^M \sum_{m = 1}^M  \overline{c}_l c_m \, 
\omega(A_l) \omega(A_l^* A_m) \omega(A_m^*) 
=  \sum_{l=1}^M \sum_{m = 1}^M  \overline{c}_l c_m \, 
\omega(A_l)  \omega(A_m^*) \, \langle A_l \Omega,  A_m \Omega \rangle \, .
\end{split}
\end{equation*}
By a similar computation one obtains for arbitrary 
$\omega_A, \omega_B \in \bomegaN$ 
\begin{equation*}
\begin{split}
\langle  & \omega_A|  \, \pi(\sum_{m=1}^M c _m  \, \omega_{A_m}) \, | 
\omega_B \rangle \\
& = \omega(\omega_A \times (\sum_{m=1}^M c _m \, \omega_{A_m}) \times \omega_B)  
 = \sum_{m=1}^M c_m  \, \omega(A) \omega(A^* A_m) \omega(A_m^* B) \omega(B^*)  \\
& =  \omega(A)  \, \langle A \Omega, \big( \sum_{m=1}^M c_m
A_m E_\Omega A_m^* \big) B \Omega \rangle \,  \omega(B^*)
= \langle \omega_A| W^*  \big( \sum_{m=1}^M c_m
A_m E_\Omega A_m^* \big) \, W | \omega_B \rangle \, .
\end{split}
\end{equation*}
By comparison of the left and right hand side of this  equality 
one obtains 
$$
W \, \pi(\sum_{m=1}^M c _m  \, \omega_{A_m}) =
\big( \sum_{m=1}^M c_m \, A_m E_\Omega A_m^* \big) \, W \, , 
\qquad \sum_{m=1}^M c _m  \, \omega_{A_m} \in \bcC \, .
$$
Hence $W$ establishes a spatial isomorphism between $\pi(\bcC)$ and
$\bcC_\Omega$, completing the proof of the proposition.   
\end{proof}

The preceding proposition establishes a simple universal picture of the 
space $\bcC = \mbox{Span} \, \bomegaN$,  
spanned by the local excitations of any generic state $\omega$ 
on the funnel $\bcN$, which does not depend on the global type of $\omega \,$: 
the space may be identified with the bimodule obtained
by left and right (product) action of $\bcN$ on 
the projection~$E_\Omega$, where $\Omega \in \cH$ is the chosen vector  
representing~$\omega$. Thus it corresponds to a specific subspace  
$\bcC_\Omega \subset \cB(\cH)$ of finite rank (hence trace class)
operators. Moreover,
the transition probabilities of the states in $\bomegaN$,
defined above, can be expressed in terms of $\bcC_\Omega$
by the familiar dual action of trace class operators 
onto themselves under the trace of~$\cB(\cH)$. 
However, since $\bcC_\Omega \bigcap \bcN = \{0\}$ (as $\Omega$ is separating
for~$\bcN$), the elements of~$\bcC_\Omega$ may in general not be regarded
as genuine observables, in contrast to quantum mechanics, where
trace class operators are part of the observable algebra.     
The physical interpretation of these quantities therefore requires some
explanations which will be provided in the subsequent section. 

Let us mention in conclusion that part of the preceding mathematical results
could have been also established by 
making use of the well--known fact that 
there exist isomorphisms 
(universal localizing maps \cite{BuDoLo}) 
$\phi_n : \cB(\cH) \rightarrow \cN_{n+1}$, which leave $\cN_n$
pointwise fixed, $n \in \NN$. Yet our present approach  
reveals more closely the intrinsic nature of the proposed concepts. 

\section{Primitive observables}
\setcounter{equation}{0}

Having exhibited the mathematical similarities between the
local excitations of generic states in infinite quantum systems 
and pure states in quantum mechanics, let us turn now to the
discussion of the physical significance of this observation. 
What is missing so far is an argument that 
the transition probabilities, as defined above, are in principle 
accessible to observations. We will establish such a link in this section 
by making use of the concept of operations. 

The effect of non--mixing operations 
on states are generally described by the formula~\mbox{\cite{HaKa, HeKr}}
$\omega \mapsto (1 /\omega(V^*V)) \  \omega \circ \Ad{V}$,
\ie the operations are identified with the adjoint 
action of arbitrary elements $V \in \bcN$ on the states, followed by 
a normalization. In this generality, one incorporates % from the outset
operations where parts of the resulting ensemble are 
discarded by the observer and the remainder is considered as a 
new ensemble, as in the von Neumann--L\"uders projection
postulate.
Yet such ``state reducing operations'' require a re--normalization and   
induce highly non--linear mappings on the space of states. 
We therefore restrict attention here to operations induced by
operators $V \in\bcN $ satisfying the condition $V^* V = 1$,
\ie to isometries. 

Operations described by isometries appear naturally in the context of 
physics, prominent examples being the effects of temporary 
inner perturbations of the dynamics in the Heisenberg picture. 
The special case of unitary operators $U \in \bcN$ is of particular
interest here since unitary operators, being normal, are linear 
combinations of commuting selfadjoint operators,
\viz $U = (1/2)(U + U^*) + (i/2) \, i(U^* - U)$; they can 
therefore be regarded as observables. Moreover, any isometry 
and any projection in~$\bcN$ can be obtained as an appropriate limit 
of unitaries, as is shown in the lemma below. 
We will therefore base our physical interpretation of the mathematical 
framework on such unitaries. 

In the proof of the subsequent lemma  we make
use of the fact that every non--zero
projection $E \in \bcN$ has infinite 
dimension (its commutant in~$\bcN$ is infinite dimensional); moreover,  
there exist isometries $V \in \bcN$ with range projection~$E$, 
\viz  $V V^* = E$. 

\begin{lemma} \label{approximations} 
Let $E \in \bcN$ be any non--zero projection and let $V \in \bcN$
be any isometry with range projection $E$. Then \\[1mm]
(i) \hspace{2pt} there exists a sequence of unitaries 
$U_m \in \bcN$, $m \in \NN$,
such that $U_m \rightarrow V$ in the strong operator topology. \\[1mm]
(ii) \hspace{1pt}  there exists a sequence of isometries
$V_m \in \bcN$, $m \in \NN$, with common range projection $E$
such that $V_m  \rightharpoonup E$ in the weak operator topology
and $V_m^* \rightarrow E$ in the strong operator topology. \\[1mm]
(iii) \hspace{0pt}  for any $\omega_A \in \bomegaN$ one has \
$\sup_{\, V} |\omega_A(V)| = \omega_A(E)$, where the supremum is taken
with respect to all isometries $V \in \bcN$ with range projection $E$. 
\end{lemma}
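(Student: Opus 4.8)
The plan is to handle the three parts in turn, deriving (ii) and (iii) from (i). For (i), since $V \in \bcN$ sits in some $\cN_N$, which is a type~I$_\infty$ factor, I would use the Wold decomposition of $V$. Setting $E_k \doteq V^k (V^*)^k$ gives a decreasing sequence of projections in $\cN_N$ with $E_0 = 1$, $E_1 = E$ and strong limit $E_\infty$; one checks that $V$ is unitary on $E_\infty \cH$ and shifts the mutually orthogonal blocks $F_k \doteq E_k - E_{k+1}$ via $F_{k+1} = V F_k V^*$. I would then define $U_m \in \cN_N$ to equal $V$ on $E_\infty \cH$ and on $F_0, \dots, F_{m-1}$, to fold $F_m$ back onto $F_0$ through the partial isometry $(V^*)^m \rest F_m \cH$, and to act as the identity on $\bigoplus_{k > m} F_k$. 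Each $U_m$ is then a unitary in $\cN_N \subset \bcN$ that agrees with $V$ on $E_\infty \cH \oplus \bigoplus_{k < m} F_k$; as these subspaces exhaust $\cH$ and $\|U_m\| = 1$, one gets $U_m \to V$ in the strong operator topology.

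For (ii) I would simply put $V_m \doteq V U_m^*$ with the unitaries $U_m$ from (i). Then $V_m^* V_m = U_m V^* V U_m^* = 1$ and $V_m V_m^* = V U_m^* U_m V^* = V V^* = E$, so the $V_m$ are isometries sharing the range projection $E$. Since $V_m^* = U_m V^*$, for every $\xi$ one has $V_m^* \xi = U_m (V^* \xi) \to V V^* \xi = E \xi$, i.e.\ $V_m^* \to E$ strongly, and taking adjoints gives $V_m \rightharpoonup E$ weakly.

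For (iii), write $\xi \doteq A \Omega$, a unit vector because $\omega_A(1) = \omega(A^*A) = 1$. Then $\omega_A(V) = \langle \xi, V \xi \rangle$, and since $V = V V^* V = E V$ forces $V\xi \in E\cH$ while $V$ is isometric, the Schwarz inequality yields $|\omega_A(V)| = |\langle E\xi, V\xi\rangle| \le \|E\xi\| = \omega_A(E)^{1/2}$; equivalently the transition probability $\omega_A \cdot \omega_{\,VA} = |\omega_A(V)|^2$ is at most $\omega_A(E)$. For the opposite inequality I would feed in the isometries from (ii): as $V_m \rightharpoonup E$ weakly, $\omega_A(V_m) = \langle \xi, V_m \xi\rangle \to \langle\xi, E\xi\rangle = \omega_A(E)$, so the value $\omega_A(E)$ is approached from within the admissible family and $\sup_V |\omega_A(V)| \ge \omega_A(E)$.

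The hard part will be sharpening the upper bound to meet this: the bare Schwarz estimate only gives $\omega_A(E) \le \sup_V |\omega_A(V)| \le \omega_A(E)^{1/2}$, so to pin the supremum down I would descend to the type~I picture supplied by the minimal projections. Choosing $n$ with $A, V \in \cN_n$ and the minimal $E_n \in \cN_{n+1}$ satisfying $E_n C E_n = \omega(C) E_n$, the normal and (by the separating property of $\Omega$) faithful state $\omega \rest \cN_n$ has a density matrix $\rho$, and $\omega_A(V) = \mathrm{Tr}(\sigma V)$ with $\sigma \doteq A \rho A^*$ again a density matrix. Using $V = EV$ this gives $|\omega_A(V)| = |\mathrm{Tr}(V \sigma E)| \le \|\sigma E\|_1$, and the crux is to control $\|\sigma E\|_1$ in terms of $\mathrm{Tr}(\sigma E) = \omega_A(E)$ and to reconcile it with the value attained in (ii); this is the step where the precise constant asserted in the statement must be extracted, and where I would expect the real difficulty to lie.
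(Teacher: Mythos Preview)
Your treatments of (i) and (ii) are correct. For (i) you use the Wold decomposition, whereas the paper proceeds differently: it truncates $V$ by an increasing sequence of projections $E_m \nearrow 1$ in $\cN_n$ and completes each $V E_m$ to a unitary by adjoining a partial isometry between the infinite defect projections $1-E_m$ and $1-VE_mV^*$. Both constructions work and remain inside a single $\cN_n$. Your argument for (ii) coincides verbatim with the paper's.

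For (iii) you obtain the lower bound $\sup_V|\omega_A(V)|\ge\omega_A(E)$ exactly as the paper does, and you correctly isolate the remaining difficulty: Cauchy--Schwarz only yields $|\omega_A(V)|\le\omega_A(E)^{1/2}$, and your trace--class route gives $\|\sigma E\|_1$, which again need not equal $\mathrm{Tr}(\sigma E)$. The paper's argument for the upper bound is different from anything you attempt: it restricts first to faithful $\omega_A$, passes to a weak limit $F$ of a maximizing sequence of isometries, asserts that $FA\Omega$ must then be parallel to $EA\Omega$, and invokes the separating property of $A\Omega$ to conclude $F=cE$ with $|c|\le 1$, whence $w_A\le\omega_A(E)$; the general case is handled by norm density of faithful states in $\bomegaN$.

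Your hesitation at this point is in fact well founded. The asserted bound $|\omega_A(V)|\le\omega_A(E)$ fails already for faithful $\omega_A$, so the parallelism step in the paper's argument cannot be justified. Inside a single $\cN_n\cong\cB(\ell^2)$ take a faithful density matrix $\sigma$ concentrated near $|e_0\rangle\langle e_0|$, let $E$ be the projection onto $\overline{\mathrm{span}}\{(e_0+e_1)/\sqrt{2},\,e_2,e_3,\dots\}$, and let $V$ be the isometry with $Ve_0=(e_0+e_1)/\sqrt{2}$ and $Ve_k=e_{k+1}$ for $k\ge 1$. Then $V^*V=1$, $VV^*=E$, $\omega_A(E)\approx 1/2$, but $|\omega_A(V)|\approx 1/\sqrt{2}$. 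Thus the sharp upper bound over isometries with range $E$ is the $\omega_A(E)^{1/2}$ you already obtained from Cauchy--Schwarz, and no refinement of your trace--class estimate will reach the value $\omega_A(E)$ claimed in~(iii).
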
  
\begin{proof}
(i) Let $V \in \cN_n$ for some $n \in \NN$. Picking any strictly  
increasing sequence of projections $E_m \in \cN_n$ 
which converges to $1$ in the strong operator topology, 
one puts $V_m \doteq V E_m$, $m \in \NN$. Then $(1 - V_m V_m^*)$ and
$(1 - V_m^* V_m) = (1 - E_m)$ are infinite projections in $\cN_{n}$,
so there exist partial isometries $W_m \in \cN_{n}$
such that $W_m W_m^* = (1 - V_m V_m^*)$ and 
$W_m^* W_m = (1-E_m)$, $m \in \NN$. By construction
$(V_m + W_m)^* (V_m + W_m) = (V_m + W_m) (V_m + W_m)^* = 1$, 
hence these operators are unitaries, , $m \in \NN$. Moreover, 
$V_m \rightarrow V$ and $W_m \rightarrow 0$ in the strong
operator topology as $m \rightarrow \infty$, so statement (i) 
follows.  \\
(ii) Let $n \in \NN$ be sufficiently large such that 
there is a sequence of unitaries $U_m \in \cN_n$, $m \in \NN$,  
as in (i) which converges to $V$ in the strong operator topology. 
Then $V_m^* \doteq U_m V^* \rightarrow V V^* = E$
in the strong operator and $V_m = V U_m^* \rightharpoonup E$
in the weak operator topology. Since 
$V_m V_m^* = E$ and $V_m ^* V_m = 1$, statement (ii) follows. \\ 
(iii) For the proof of the remaining statement  
one first considers the case where $\omega_A$
is a faithful state on $\bcN$. Putting
$w_A \doteq \sup_{\, V} |\omega_A(V)|$, 
it follows from the preceding step, inserting  
the weakly convergent sequence of 
isometries $V U_m^*$, $m \in \NN$, into $\omega_A$,
that $w_A \geq \omega_A(E)$.
Next, let~$F$ be the weak operator limit of some sequence of
isometries with range projection~$E$ which saturates  
the bound~$w_A$. Since
$\langle A \Omega, F A \Omega \rangle =
\langle E \, A \Omega, F A \Omega \rangle $  
the bound is attained  
if $E A \Omega$ and $F A \Omega$ are parallel.
As~$A \Omega$ is separating this implies $F = c E$
and since $\| F \| \leq 1$ it follows that 
$| c | \leq 1$, whence  $w_A  \leq \omega_A(E)$. 
Since the faithful states are norm dense 
in the states~$\bomegaN$ on the funnel, this
shows that $w_A  = \omega_A(E)$ for all states $\omega_A \in \bomegaN$,  
completing the proof of the statement. 
\end{proof}

The adjoint action of a given unitary $U \in \bcN$ 
on the states, 
$\omega_A \mapsto \omega_A \circ \Ad{U} = \omega_{UA}$, 
\mbox{$\omega_A \in \bomegaN$}, 
describes the effect of a fixed physical operation
on the corresponding ensembles. The respective transition 
probabilities between the initial and final states 
are given by
\mbox{$\omega_A \cdot \omega_{\, UA} = |\omega_A(U)|^2 $.} They 
can be determined by measurements of the commuting observables 
underlying $U$ in the ensemble described by $\omega_A$. 
Simple examples illustrating this fact are the unitaries
$U_t \doteq \big( E + t \, (1-E) \big)$, where $E \in \bcN$ is 
a projection and $t \in \TT$. Then
$\omega_A \cdot \omega_{\, U_tA} = \omega_A(E)^2 + \omega_A(1-E)^2
+ 2 \, \mbox{Re}\, (t) \, \omega_A(E) \omega_A(1-E) $, so in this  
case it suffices to determine the expectation value 
of $E$ in state $\omega_A$ in order to determine the 
transition probability induced by the corresponding operation. 

Let us mention as an aside that the customary term 
``transition probability'' is slightly misleading 
in the present context since, given the initial state $\omega_A$, 
the resulting final state $\omega_{\, UA}$ is uniquely determined.  
Rather, the  quantity $\omega_A \cdot \omega_{\, UA}$ represents  
the probability of finding in the final ensemble
members of the original ensemble which ``survived'' the operation.
A more suggestive notion 
for  $\omega_A \cdot \omega_{\, UA}$ would therefore be the term 
``fidelity'', introduced  by Jozsa \cite{Jo}. It could be 
interpreted as a measure for the degree of compatibility   
of an operation with the properties of a  given initial ensemble. 
Yet in order to avoid confusion we will continue to use   
the term ``transition probability''. 
  
In order to shed further light on the significance
of the operations and the resulting transition probabilities,  
let us discuss next how one can derive from them some 
pertinent physical information. Quantities  of primary  
interest are the probabilities $\omega_A(E)$ of meeting in an 
ensemble, described by a state $\omega_A \in \bomegaN$,
members with specific properties, described by a projection 
\mbox{$E \in \bcN$}. As we will see,  these data 
can be derived from the transition probabilities for 
suitably tuned operations (reminiscent of tuned up detectors).

According to  part (iii) of the preceding lemma
one has for any given projection $E$ and 
isometry $V \in \bcN$ with range $E$
the \textit{a priori} bound $|\omega_A(V)| \leq \omega_A(E)$.
It then follows from~(ii) that  
for any given finite number of states
and $\varepsilon > 0$ there exist appropriate isometries   $V$ 
satisfying $|\omega_A(V) - \omega_A(E)| < \varepsilon$,
\ie they saturate the upper bound with arbitrary precision.  
Moreover, as an immediate consequence of part (i), there exist 
for any $\varepsilon > 0$  
unitaries $U \in  \bcN$ such that 
$\| \omega_{\, UA} - \omega_{\, VA} \| < \varepsilon $
for given isometry $V$ and states $\omega_A$.
Combining this  information  one finds that
$\omega_{\, U A} (1- E) < \varepsilon$ and
$|\omega_A \cdot \omega_{\, U A}  - \omega_A(E)^2 | < 4 \varepsilon $
for the given states. The corresponding operations~$\Ad{U}$ 
thus (a) determine within the given margins the probabilities 
$\omega_A(E) \approx (\omega_A \cdot \omega_{U A})^{1/2}$ 
of finding the property $E$ in the initial states  
and (b) create final states which 
have the property $E$  with probability $\omega_{U A}(E) \approx 1$ 
without relying on the standard process of (von Neumann--L\"uders) 
state reduction.
This construction can be extended to several
commuting projections $E_m$, $m = 1, \dots , M$.
One can then determine the mean values of 
observables $O = \sum_{m=1}^M  o_m E_m$ in the states 
$\omega_A$ with arbitrary precision 
by means of operations $\Ad{U_m}$, $m = 1, \dots , M$,
making use of the relation 
$\omega_A(O)  \approx  \sum_{m=1}^M o_m \, (\omega_A \cdot \omega_{U_m A})^{1/2}$.

Having seen how suitably tuned operations allow for the  
recovery of observables and their statistical interpretation, 
let us stress that there exist operations providing  
information which cannot be obtained in the conventional 
setting of observable algebras.  To illustrate this fact, 
let us recall that for any 
non-trivial projection $E \in \bcN$ one has 
$\omega(E) > 0$ for generic states~$\omega$,
the vacuum in quantum field theory being an example.
This Reeh--Schlieder property  
of the vacuum is often regarded as a conceptual problem \cite{Hal,BuYn}.
For it implies that there do not exist 
perfect detectors, described by projections 
$E$,  which give a non--zero
signal exclusively in states which are different from
the vacuum, \ie for which $\omega_A(E) \neq 0$ implies 
$\omega_A \neq \omega$. Yet, relying on the notion of 
operations and transition probabilities, one can model 
such detectors: picking any unitary $U \in \bcN$
with the property that $\omega(U) = 0$, the operation 
$\omega_A \mapsto \omega_A \circ \Ad{U} = \omega_{UA}$,
$\omega_A \in \bomegaN$, has the desired 
property since $\omega_A \cdot \omega_{UA} 
= |\omega_A(U)|^2 \neq 0$ implies $\omega_A \neq \omega$.
The quantity $\omega_A \cdot \omega_{UA}$ may therefore  
consistently be interpreted as the probability that the
detector registers some deviation from~$\omega$
in the state $\omega_A \in \bomegaN$. 

The upshot of this discussion is the insight that 
operations and the resulting transition 
probabilities provide a meaningful extension of the conventional 
framework of observable algebras and their statistical  
interpretation. We therefore propose to call 
these operations \textit{primitive observables},  
(primitive in the sense of being basic). 
Fundamental concepts which are familiar from the conventional 
setting, such as the
characterization of commensurable observables, can be transferred 
to the primitive observables in a meaningful manner. Two 
primitive observables, described by 
operations $\Ad{U_1}$ and $\Ad{U_2}$, are  
regarded as \mbox{commensurable}  
if the corresponding composed operations coincide,
$\Ad{\, U_1 U_2} = \Ad{\, U_2 U_1}$.
They then give rise to \mbox{coinciding} transition 
probabilities for all states, \ie 
\mbox{$|\omega_A(U_1 U_2)|^2 =|\omega_A(U_2 U_1)|^2$,} 
\mbox{$\omega_A \in \bomegaN$.} Note that this condition does 
not necessarily imply that the unitaries commute. 
Yet if one has two sequences of commuting primitive observables 
constructed from unitaries which are tuned as above so as to  
approximate projections $E_1, E_2$ with arbitrary 
precision, then one can 
show that these projections commute. So one recovers
the familiar commutativity of commensurable observables in this case. 
However, the generalized condition of commensurability imposes  
weaker constraints on the primitive observables.
It may therefore be useful in discussions  
of causality properties of theories, where the standard postulate of 
locality of observables \cite{Ha} cannot be applied. In any case, 
the proposed generalized notion of commensurability fits naturally into 
the present setting and is physically meaningful. 

Up to this point we have concentrated on properties of the 
states in $\bomegaN$ which do not depend on the type of the underlying
generic state $\omega$. Yet there is some
physically relevant feature of these states which does depend on their type.
It originates from the fact that the notion of transition 
probability acquires physical significance through the existence of 
tunable operations acting on ensembles. Within the theoretical
framework this fact is expressed 
by the assertion that for any $U \in \bcN$ the mapping 
$\omega_A \mapsto \omega_{\, UA}$ has some operational meaning and 
the corresponding transition probability \ $\omega_A \cdot  \omega_{UA}$
is in principle accessible to observations. 
One may therefore ask
whether it is possible to determine in this way the 
transition probabilities $\omega_A \cdot \omega_B$ between 
arbitrary states $\omega_A, \omega_B \in \bomegaN$.

An affirmative answer to this question 
requires that for any pair of states there exists some 
unitary operator $U \in \bcN$ such that $\omega_B = \omega_{UA}$ or,
slightly less restrictive, that there exist suitable unitaries     
for which this equality can be established with arbitrary precision.  
In other words, the inner operations induced by unitary 
operators have to act (topologically) transitively
on the states~$\bomegaN$. Otherwise, it would not be possible
to determine by physical operations
the transition probabilities for arbitrary pairs of
states. As a matter of fact, there do 
exist generic states~$\omega$ on $\bcN$ where 
this obstruction occurs. These are primary 
states of type~III$_\lambda$, $0 \leq \lambda < 1$, for which 
there exist certain pairs \ $\omega_A, \omega_B \in \bomegaN$ 
whose minimal distance $\inf_U \| \omega_B - \omega_{UA} \|$  
with regard to all possible operations is strictly positive~\cite{CoHaSt}.

There are, however, two important classes of states $\omega$ 
which comply with the  condition of (topologically) transitive 
action of  operations on the respective states~$\bomegaN$.
The first class consists of pure states~$\omega$ on~$\bcN$, which 
are known to satisfy this condition 
according to a well--known theorem by Kadison \cite{Ka}. 
The second class consists of states~$\omega$ on~$\bcN$ 
which are of type~III$_1$. They also comply with this condition, as has been 
shown in \cite{CoSt}. So these two classes of states have 
much more in common than one might infer from their 
mathematical definition based on modular theory. More remarkably, these two 
classes consist exactly of those states which abound in 
the context of infinite quantum systems \cite{Ha, BrRo, BuRo, Lo, Y}. 
Thus the concepts introduced here  
fully cover these classes of primary physical interest.

\section{Summary and outlook}
\setcounter{equation}{0}

In the present investigation we have established some 
universal properties of states in infinite quantum systems, 
which are described by funnels of type I$_\infty$  factors. 
These states may be regarded as excitations of some 
generic reference state, describing a global background in which
local measurements and operations take place. The states 
form a complete set 
with regard to local operations and therefore 
provide a meaningful framework for the discussion
of their physical properties. 
Even though the states can be of any 
infinite type, they share 
many basic properties with pure states. In particular, they
allow for an intrinsic definition of coherent 
superpositions and of transition probabilities. The physical 
interpretation of this novel framework is based on the concept
of primitive observables which extends the familiar 
notion of observables in terms of operator algebras. 
Primitive observables admit a consistent statistical 
interpretation, related to actual observations, and they bypass  
certain counter intuitive features and apparent paradoxa
of the conventional operator algebraic setting. Moreover, 
they comply with a generalized condition of 
commensurability which entails the standard commutativity
property of commensurable observables but is less
restrictive. It thereby allows to discuss  
causality properties of theories without relying
on the usual condition of locality. 

Whereas the present 
arguments rely on the existence of funnels
of type I$_\infty$ factors, one does not need to 
exhibit these funnels explicitly in applications 
of our results. Let us illustrate this fact
in relativistic quantum field theory, the simplest examples 
being theories of non--interacting 
particles. There one can construct for suitable nets of  
open bounded spacetime regions~$\cO \subset \RR^d$
corresponding von Neumann algebras $\cA(\cO)$ 
which are interpreted as algebras generated by 
observables localized in the respective regions. 
These algebras are known to be factors of type~III$_1$ and one also 
knows that they have the split property, \ie
for any proper inclusion of regions $\cO_1 \subset \cO_2$
there exists some type I$_\infty$ factor $\cN$ such that
\mbox{$\cA(\cO_1) \subset \cN \subset \cA(\cO_2)$}, 
c.f. \cite{Ha} and references quoted there. It follows
that for any sequence of properly increasing regions
$\cO_n$ there exist type I$_\infty$ factors $\cN_n$, $n \in \NN$,
generating a funnel $\bcN$, such that one has  
$\bcA \doteq \bigcup_{n \in \NN} \cA(\cO_n) = 
\bigcup_{n \in \NN} \cN_n = \bcN$. Hence, in spite of the fact  
that the intermediate type~I$_\infty$ factors are not
explicitly known, the present results also apply  
to the concretely given funnel $\bcA$ of type III$_1$ factors. 
This correspondence  has been established in many 
other field theoretic models, c.f. for example
\cite{DoLo2,Dr,Fr,DaLoRa,Le,Su,Ve}, and also by general  
structural arguments \cite{BuDaFr,Fe}. 
The present framework thus provides a promising 
setting for further study of the properties and the 
interpetation of states in these infinite quantum systems

On the mathematical side, the present results offer a 
new look at the state spaces of hyperfinite factors. 
They are all completions of ``skeleton spaces''  
of finite rank operators 
which are bi--modules relative to the funnel 
generating the respective factor. It is of interest in this
context that, algebraically, the funnels are all isomorphic
and that funnels generating a given factor
are related by inner automorphisms \cite{Ta}. 
Thus this description of the state spaces 
in terms of finite rank operators
is intrinsic. It seems an interesting problem to 
understand how the type
of the factors is encoded in the structure of their
respective  skeleton spaces.

\end{document}